\tikzset{city/.style={black, circle, draw, inner sep=1.2pt}}
\newcommand{\TSP}{{\sc TSP}\xspace}
\newcommand{\I}{\ensuremath{\cal I}\xspace}
\newcommand{\MetricTSP}{{\sc Metric TSP}\xspace}
\newcommand{\DubinsTSP}{{\sc Dubins TSP}\xspace}
\newcommand{\ExactCover}{{\sc Exact Cover}\xspace}
\newcommand{\AStar}{\ensuremath{A^*}\xspace}
\newcommand{\multipointastar}{\texttt{Multi Point A$^*$}\xspace}
\newcommand{\flipTour}{\texttt{FlipTour}\xspace}
\newcommand{\C}{\ensuremath{{\cal C}}\xspace}
\newcommand{\successors}{\texttt{succ}\xspace}
\newcommand{\VectorTSP}{{\sc Vector TSP}\xspace}
\newcommand{\VectorTSPWS}{{\sc Vector TSP With Stops}\xspace}
\newcommand{\EuclideanTSP}{{\sc Euclidean TSP}\xspace}
\newcommand{\Racetrack}{{\sc Racetrack}\xspace}
\newcommand{\ContinuousVectorTSPWithStops}{{\sc Continuous VectorTSP with Stops}\xspace}
\newcommand{\ConstructionOne}{Construction 1\xspace}
\newcommand{\ConstructionTwo}{Construction 2\xspace}
\newcommand{\ConstructionThree}{Construction 3\xspace}
\theoremstyle{definition}
\numberwithin{subcase}{case}
\begin{document}

\title{\textsc{Vector TSP}: A Traveling Salesperson Problem with Racetrack-like acceleration constraints \thanks{A preliminary version of this work was presented at ALGOSENSORS 2020. This work has been supported by ANR project TEMPOGRAL (ANR-22-CE48-0001).}}

\author{Arnaud Casteigts\inst{1,2}
  \and
	Mathieu Raffinot\inst{2}  \and 
	Mikhail Raskin\inst{2}  \and 
	Jason Schoeters\inst{3}
}

\institute{
  University of Geneva, Switzerland
  \and
  LaBRI, Université de Bordeaux, CNRS, Bordeaux INP, France
  \and
  University of Cambridge, United Kingdom\medskip\\
\email{\{arnaud.casteigts, mathieu.raffinot, mikhail.raskin\}@labri.fr}\\
\email{js2807@cam.ac.uk}
}

\date{}
\maketitle

{\centering\footnotesize \textit{We dedicate this paper to Ralf Klasing for his 60th birthday. Ralf had many contributions to the \TSP, addressing both approximation and exact algorithms, and he also considered less common versions of the problem, such as the \textit{Bamboo Garden Trimming Problem}, which certainly shares the feature of being as exotic as Vector TSP. Ralf, we hope you will enjoy, happy birthday!}\par}

\begin{abstract}
We study a new version of the Traveling Salesperson Problem, called \VectorTSP, where the traveler is subject to discrete acceleration constraints, as defined in the paper-and-pencil game Racetrack (also known as Vector Racer). In this model, the degrees of freedom at a certain point in time depends on the current velocity, and the speed is not limited. 
  
The paper introduces this problem and initiates its study, discussing also the main differences with existing versions of TSP. Not surprisingly, the problem turns out to be NP-hard. A key feature of \VectorTSP is that it deals with acceleration in a discrete, combinatorial way, making the problem more amenable to algorithmic investigation. The problem involves two layers of trajectory planning: (1) the order in which cities are visited, and (2) the physical trajectory realizing such a visit, both interacting with each other. This interaction is formalized as an interactive protocol between a high-level tour algorithm and a trajectory oracle, the former calling the latter repeatedly. We present an exact implementation of the trajectory oracle, adapting the A* algorithm for paths over multiple checkpoints whose ordering is \emph{given} (this algorithm being possibly of independent interest). To motivate the problem further, we perform experiments showing that the naive approach consisting of solving the instance as an \EuclideanTSP first, then optimizing the trajectory of the resulting tour, is typically suboptimal and outperformed by simple (but dedicated) heuristics.

\end{abstract}

\section{Introduction}
\label{sec:intro}
The problem of visiting a given set of locations and returning to the starting point, while minimizing the total cost, is known as the \textsc{Traveling Salesperson Problem} (\TSP, for short). The problem was independently formulated by Hamilton and Kirkman in the 1800s~\cite{biggs1986graph} and has been extensively studied since then.
Many versions of this problem exist, motivated by applications in fields as varied as delivery planning~\cite{chalasani1999approximating}, stock cutting~\cite{sonawane2020optimizing}, and DNA computation/reconstruction~\cite{mallen2013dna, narayanan1998dna}. 
In the original version, an instance of the problem is equivalent to a graph whose vertices represent the {\em cities} (places to be visited) and whose weights on the edges represent the cost of moving from one city to another. The goal is to find the minimum cost tour (optimization version) or to decide whether a tour having at most some cost exists (decision version) subject to the constraint that every city is visited {\em exactly} once.
Karp proved that the Hamiltonian Cycle problem is NP-hard, which implies that TSP is NP-hard~\cite{karp1972reducibility}.
On the positive side, while the trivial algorithm has a factorial running time (considering all the possible tours), Held and Karp presented in~\cite{heldkarp62} a dynamic programming algorithm running in time $O(n^22^n)$, which as of today remains the fastest known deterministic algorithm for TSP (a faster randomized algorithm exists~\cite{bjorklund}).
TSP was shown to be inapproximable to within a constant factor by Orponen and Manilla in 1990~\cite{orponenmannila90}. 
Some generalizations of TSP include a version with unknown environment which is discovered while visiting the cities, known as the \textsc{Covering Canadian Traveler Problem}~\cite{canadian1,canadian2}, and more recently, a version operating on a temporal graph where costs between cities are time-dependent, known as \textsc{Time Dependent TSP}~\cite{fontaine:hal-03865036}. Another problem closely related to the \TSP is the \textsc{Bamboo Garden Trimming Problem} \cite{gkasieniec2017bamboo,kuszmaul2022bamboo}, where one aims to maintain the height of a forest of bamboos to a minimum.

In many cases, the problem can be restricted to a more tractable setting.
In \MetricTSP, the costs must respect the triangle inequality, namely $cost(u,v) \le cost(u,w) + cost(w,v)$ for all $u,v,w$, or equivalently, the constraint of visiting a city exactly once is relaxed.
\MetricTSP was shown to be approximable within a ratio of $1.5$ by Christofides~\cite{christofides76}. 
This ratio has been slightly improved recently by Karlin \textit{et al.} in~\cite{karlin2021slightly}. A lower bound of $1.0045$ is known for the approximation ratio (i.e., no PTAS exists for \MetricTSP~\cite{PV06}).
A particular case of \MetricTSP is the \EuclideanTSP, where cities are points in the plane, and the weights are the Euclidean distance between them. This problem is still NP-hard (see Papadimitriou~\cite{papadimitriou1977euclidean} and Garey \textit{et al.}~\cite{garey1976some}), but admits a PTAS, discovered independently by Arora~\cite{arora96} and Mitchell~\cite{mitchell99}.

One attempt to add physical constraints to the \EuclideanTSP is \DubinsTSP~\citep{le2007curvature}, which encode inertia through bounding the curvature of a trajectory by some radius, thereby offering a geometric abstraction to the problem. However, the radius being a fixed value, the speed is considered constant.
Savla \textit{et al.} present multiple results on \DubinsTSP, such as an elegant algorithm in~\cite{savla2005point} which modifies every other segment of an \EuclideanTSP solution to enforce the curvature constraint. 

More flexible models for acceleration have been considered in the context of path planning problems, where one aims to find an optimal trajectory between \emph{two} given locations (typically, with obstacles), while satisfying acceleration constraints. More generally, the literature on {\it kinodynamics} is pretty vast (see, e.g.~\cite{canny1988complexity,canny1991exact,donald1993kinodynamic}). The constraints are often formulated in terms of a number of dimensions, a bounded acceleration and a bounded speed. The positions are typically represented by real coordinates, using tools from control theory and analytic functions~\cite{https://doi.org/10.48550/arxiv.2005.03186, savla2009traveling, yakowitz1984computational}.  

In a recreational column of the {\em Scientific American} in 1973~\cite{gardner1973sim}, Martin Gardner presented a paper-and-pencil game known as \Racetrack (not to be confused with a similarly titled \TSP heuristic~\cite{mobile_sink} that is not related to acceleration). The physical model is as follows. In each step, a vehicle moves according to a discrete-coordinate vector (initially the zero vector), with the constraint that the vector at step $i+1$ cannot differ from the vector at step $i$ by more than one unit in each dimension. The game consists of finding the best trajectory (smallest number of vectors, regardless of distance) in a given race track. A nice feature of such models is the ability to think of the state of the vehicle at a given time as a point $(x,y,dx,dy)$ in a {\em configuration space}.
Note that this game is not a version of TSP, since only the starting and final positions are specified. In fact, an optimal trajectory in this game can be found by performing a simple breadth-first search in the configuration graph. These techniques, which we describe in more details in the present paper, were rediscovered many times, both in the \Racetrack context (see {\it e.g.}~\cite{bekos2018algorithms, blogernie, olsson2011genetic, schmid2005vector}) and in the kinodynamics literature (see {\it e.g.}~\cite{canny1988complexity,donald1993kinodynamic})---they can now be considered as folklore.
Note that a more efficient algorithm for \Racetrack instances of uniform width was proposed by Bekos \textit{et al.}~\cite{bekos2018algorithms}, as well as ''local view'' algorithms in which the track is discovered in an online fashion during the race.
The theoretical complexity of the \Racetrack problem has also been studied. In particular, Holzer and McKenzie~\cite{holzer2010computational} proved that \Racetrack is NL-complete. They also proved that the reachability problem with a given time limit (referred to as single-player \Racetrack) is NL-complete, and that deciding the existence of a winning strategy in Gardner’s original game when the positions cannot be occupied by two players simultaneously is P-complete (under polylogarithmic time reductions).

\subsection{Contributions}

In this article, we combine TSP and \Racetrack into a single problem called \VectorTSP, where a vehicle must visit an (initially unordered) set of coordinates (cities), using a trajectory that obeys racetrack-like constraints. The cost measure is the total number of vectors used, regardless of the distance traveled.
Motivations for such a problems a varied. On the practical side, there is an increasing interest for using UAVs in delivery planning and patrolling. In the longer term, outer space travel will also be subject to similar isotropic constraints, where inertia matters more than speed. On the theoretical side, this problem is quite natural and the constraints are simplest possible in a discrete setting. Yet, it requires solutions that are significantly different from its standard versions, making this problem intriguing.

The paper starts by describing the \Racetrack physical model and some of its basic properties, as well as the concept of configuration graph and algorithmic tricks based on this graph. Then, we define the \VectorTSP problem, together with a number of preliminary observations. In particular, we investigate the differences between \VectorTSP and \EuclideanTSP, showing that solutions visiting the cities according to an optimal \EuclideanTSP tour are suboptimal for \VectorTSP. This motivates the study of \VectorTSP as a problem of its own.

Then, we prove that \VectorTSP is NP-hard. This result is not surprising in itself, but the proof is far from trivial. It proceeds by defining another version of the problem called \VectorTSPWS, where every city must be visited at zero speed. This problem is closer in spirit to \EuclideanTSP, making it possible to adapt the arguments used by Papadimitriou's reduction from \ExactCover to \EuclideanTSP. Then, we show that \VectorTSPWS reduces to \VectorTSP. 

On the positive side, we present an algorithmic framework for tackling \VectorTSP, based on an interactive scheme between a high-level algorithm and a trajectory oracle. The first is responsible for exploring the space of possible visit orders, while making queries to the second for assessing the quality of a particular visit order. We present algorithms for both. The high-level algorithm adapts known heuristics for \EuclideanTSP, trying to gradually improve the solution through performing 2-permutations in the tour until a local optimum is found. As for the trajectory oracle, we present an original algorithm that generalizes \AStar to the search of multi-point paths in the configuration space, with a cost estimation function that uses unidimensional projections of the distances. 

We present experimental results based on this framework, giving empirical evidence that our algorithmic framework, as simple as it is, already outperforms solutions based on an optimal resolution of \EuclideanTSP. 
The goal of these experiments is not so much to evaluate the algorithms themselves as to confirm that \VectorTSP is an original problem that deserves further independent study.

\subsection{Organization of the paper}
In \cref{sec:model}, we present the \Racetrack physical model and the \VectorTSP problem, together with a number of algorithmic tricks and observations.
In \cref{sec:basic-results}, we characterize a number of basic properties of \VectorTSP and compare it with \EuclideanTSP. We prove that \VectorTSP is NP-hard in \cref{sec:complexity}. The algorithmic framework is presented in Section~\ref{sec:algorithms}, and experiments are reported in~\cref{sec:experiments}.
Finally, we conclude with some remarks and future research directions in~\cref{sec:conclusion}.

\section{Model and definitions}
\label{sec:model}

In this section, we present the \Racetrack model and some of its algorithmic features. Then, we define \VectorTSP and establish a few facts about it.

\subsection{The Racetrack model}

The \Racetrack model~\cite{gardner1973sim} is an acceleration model
where both time and space are discrete. The model specifies what
movements a vehicle can make based on its current position and
velocity. 
In the original formulation of~\cite{gardner1973sim}, the
space is two-dimensional, though the model generalizes naturally to higher dimensions.

\subsubsection*{Configuration.}

The state of the vehicle at a certain time is called a \emph{configuration}. A configuration $c$ is encoded by a position $pos(c)$
and a velocity $vel(c)$, both in $\mathbb{Z}^2$, being
possibly abbreviated as a $4$-tuple $(x, y, dx, dy)$, where $dx$ and $dy$ correspond to the movements the vehicule just made in the last move (initially $0$). Given a
configuration $c$, the constraints consist of restricting the
set of legal successors $\successors(c)$,
which are the configurations $c'$ whose velocity differ from
$vel(c)$ by at most one unit in each dimensions, and whose positions
correspond to moving the vehicle according to this chosen velocity.
(Morally, a configuration encodes a vector/movement that \emph{has just been} performed.)
Precisely, the successors of a configuration
$c_i=(x_i, y_i, dx_i, dy_i)$ are all the configurations
$c_j=(x_j, y_j, dx_j, dy_j)$ such that $|dx_j -dx_i| \le 1$,
$|dy_j -dy_i| \le 1$, $x_j=x_i+dx_j$ and $y_j=y_i+dy_j$.
\begin{figure}[h]
  \centering
  	\begin{tikzpicture}[scale=.34]
        \draw[step=1cm,lightgray!50,ultra thin] (-1.2,-.2) grid (19.2,14.3);
        \tikzstyle{every node}=[draw, circle, inner sep=.7pt]
        \path (0,1) node (x0) {};
        \path (5,1) node[draw=none, rotate=-90, inner sep=-5pt] (x1) {\includegraphics[width=.6cm]{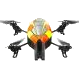}};
        \path (10,1) node[color=red] (x2bis) {};
        \path (9,2) node (x2) {};
        \path (13,3) node[color=red] (x3bis) {};
        \path (12,4) node (x3) {};
        \path (15,6) node[color=red] (x4bis) {};
        \path (14,7) node (x4) {};
        \path (16,10) node[color=red] (x5bis) {};
        \path (15,10) node (x5) {};
        \path (16,13) node[color=red!30] (x6bis) {};
        \path (15,12) node (x6) {};
        \path (15,14) node[color=white] (x7bis) {};
        \path (14,13) node (x7) {};
        \path (13,14) node[color=white] (x8bis) {};
        \path (12,14) node (x8) {};
        \draw (x0) -- (x1);
        \draw[dotted,->] (x1) -- (x2bis);
        \draw (x1) -- (x2);
        \draw[dotted,->] (x2) -- (x3bis);
        \draw (x2) -- (x3);
        \draw[dotted,->] (x3) -- (x4bis);
        \draw (x3) -- (x4);
        \draw[dotted,->] (x4) -- (x5bis);
        \draw (x4) -- (x5);
        \draw[dotted,->] (x5) -- (x6bis);
        \draw (x5) -- (x6);
        \draw[dotted,->] (x6) -- (x7bis);
        \draw (x6) -- (x7);
        \draw[dotted,->] (x7) -- (x8bis);
        \draw (x7) -- (x8);
        \tikzstyle{every node}=[draw, circle, inner sep=.7pt, color=red]
        \path (x2bis) + (1,0) node (x2bisr) {};
        \path (x2bis) + (-1,0) node (x2bisl) {};
        \path (x2bis) + (0,1) node (x2bisu) {};
        \path (x2bis) + (0,-1) node (x2bisd) {};
        \path (x3bis) + (1,0) node (x3bisr) {};
        \path (x3bis) + (-1,0) node (x3bisl) {};
        \path (x3bis) + (0,1) node (x3bisu) {};
        \path (x3bis) + (0,-1) node (x3bisd) {};
        \path (x4bis) + (1,0) node (x4bisr) {};
        \path (x4bis) + (-1,0) node (x4bisl) {};
        \path (x4bis) + (0,1) node (x4bisu) {};
        \path (x4bis) + (0,-1) node (x4bisd) {};
        \path (x5bis) + (1,0) node (x5bisr) {};
        \path (x5bis) + (-1,0) node[color=green] (x5bisl) {};
        \path (x5bis) + (0,1) node (x5bisu) {};
        \path (x5bis) + (0,-1) node (x5bisd) {};
        \path (x6bis) + (1,0) node[color=red!30] (x6bisr) {};
        \path (x6bis) + (-1,0) node[color=red!30] (x6bisl) {};
        \path (x6bis) + (0,1) node[color=red!30] (x6bisu) {};
        \path (x6bis) + (0,-1) node[color=red!30] (x6bisd) {};
        \path (x8bis) + (-1,0) node[color=green] (x8bisl) {};
        \path (x2bis) + (1,1) node (x2bisr) {};
        \path (x2bis) + (-1,1) node[color=green] (x2bisl) {};
        \path (x2bis) + (-1,-1) node (x2bisu) {};
        \path (x2bis) + (1,-1) node (x2bisd) {};
        \path (x3bis) + (1,1) node (x3bisr) {};
        \path (x3bis) + (-1,1) node[color=green] (x3bisl) {};
        \path (x3bis) + (-1,-1) node (x3bisu) {};
        \path (x3bis) + (1,-1) node (x3bisd) {};
        \path (x4bis) + (1,1) node (x4bisr) {};
        \path (x4bis) + (-1,1) node[color=green] (x4bisl) {};
        \path (x4bis) + (-1,-1) node (x4bisu) {};
        \path (x4bis) + (1,-1) node (x4bisd) {};
        \path (x5bis) + (1,1) node (x5bisr) {};
        \path (x5bis) + (-1,1) node (x5bisl) {};
        \path (x5bis) + (-1,-1) node (x5bisu) {};
        \path (x5bis) + (1,-1) node (x5bisd) {};
        \path (x6bis) + (1,1) node[color=red!30] (x6bisr) {};
        \path (x6bis) + (-1,1) node[color=red!30] (x6bisl) {};
        \path (x6bis) + (-1,-1) node[color=green] (x6bisu) {};
        \path (x6bis) + (1,-1) node[color=red!30] (x6bisd) {};
        \path (x7bis) + (-1,-1) node[color=green] (x7bisu) {};
        \path (x7bis) + (-1,-1) node[color=green] (x7bisu) {};        
      \end{tikzpicture}
      \hfill
    \caption{\label{fig:trajectories} Example of a trajectory in the \Racetrack model.}
  \end{figure}
These constraints are illustrated in Figure~\ref{fig:trajectories}.
Some versions of the model consider stronger restrictions, by allowing a single dimension to change in one time step. We refer to both versions as the $9$-successor model and the $5$-successor model, respectively. 
These models generalize naturally to $d$-dimensional spaces by applying similar constraints for each dimension. For any fixed number of dimensions $d$, a key aspect is that the number of successors is \emph{constant}. 

\subsubsection*{Trajectory.}
A trajectory is a sequence of configurations $c_1,c_2,...,c_k$, with $c_{i+1} \in \successors(c_i)$ for all $i<k$. 
The cost of a trajectory is its length $k$.
The fact that acceleration constraints are isotropic in the \Racetrack model imply that trajectories are \emph{reversible}.
The \emph{reverse} of a configuration $c=(x,y,dx,dy)$ is the configuration $c^{-1}=(x+dx,y+dy,-dx,-dy)$, i.e. the same movement backwards. (One can think of two superposed vectors in opposite directions.) Thus, if $(c_1, c_2, \dots, c_k)$ is a trajectory, then $(c_k^{-1}, \dots, c_2^{-1}, c_1^{-1})$ is also a trajectory.

\subsubsection*{Configuration space.}
The set of all configurations together with the successor relation define an infinite graph called the \emph{configuration graph}. Let $\C$ be the infinite set of all configurations, the configuration graph is the directed graph $G=(V, E)$ where $V=\C$ and $E=\{(c_i,c_j) \subseteq \C \times \C : c_j \in \successors(c_i)\}$.

Basic path-finding algorithms using the configuration graph have been independently discovered many times, we consider them as folklore. Concretely, a trajectory in the original space is a path in the configuration graph. This correspondence allows for searching optimal trajectories using standard graph search like breadth-first search (BFS), in an appropriate subset of $G$ (computed on the fly).

\subsection{The \VectorTSP problem}

We introduce the \VectorTSP problem, which consists of finding a minimum length trajectory
that visits a given set of cities and returns to the starting point, starting and finishing at zero speed, and subject to \Racetrack-like physical constraints. Note that we allow to pass over a city without a visit. The order in which the cities are visited (i.e., the \emph{tour}) is not specified in the input. The goal can thus be seen as determining a \emph{tour} and a trajectory \emph{realizing} this tour.

To define the problem more formally, let us define what \emph{visiting} a city means. Here, we say that a city (or more generally, a point) $p$ is \emph{visited} by a configuration $c$ if $p$ lies on the segment between the point $p_a=pos(c) - vel(c)$ and $p_b=pos(c)$. If several cities are visited by a same configuration, then these cities are considered visited in increasing order of distances from $p_a$. 

  \begin{mdframed}
    \underline{\VectorTSP}\smallskip\\
\textbf{Input:} A set of $n$ cities (points) $P \subseteq \mathbb{Z}^d$, for some fixed $d$, a distinguished city $p_0 \in P$, and an integer $k$ encoded in unary. \smallskip\\
\textbf{Question:} Does there exist a trajectory ${\cal T}=(c_1, \dots, c_k)$ of length at most $k$ that visits all the cities in $P$, with $pos(c_1)=pos(c_k)=p_0$ and $vel(c_1)=vel(c_k)=\overset{\to}{0}$.
\end{mdframed} 

Receiving parameter $k$ in unary guarantees that the problem is not hard for artificial reasons.
For example, it prevents a trivial instance from being hard just because the cities are exponentially far from each other (in the size of the input, see also~\cite{blogernie,holzer2010computational}). 
The optimization version of \VectorTSP is defined analogously, except that instead of providing $k$ in unary, one can replace it with any number in unary that is a (polynomially related) lower bound on the optimal trajectory, such as the length of a trivial trajectory between the two more distant cities.

\subsubsection*{Stopping at the cities.}
Although the spirit of the model is that speed is unbounded, it could be useful in some scenarios to stop physically at the cities (e.g. to avoid blured pictures, or to drop objects in a delivery scenario). We define the problem \VectorTSPWS similarly as \VectorTSP, with the additional requirement that the configurations visiting a city must have zero speed. Technically, this version of the problem can be seen as somewhat intermediate between \EuclideanTSP and \VectorTSP. We do not investigate it \emph{per se} in the present paper, but we use it as a corner stone in our proof of NP-hardness of \VectorTSP in Section~\ref{sec:complexity}.

\subsubsection*{Tour / trajectory / solution.}
As already explained,
in \VectorTSP, a tour denotes a visit order of the cities (i.e. a permutation), and a trajectory denotes a sequence of racetrack configurations (that may realize such a tour). Given a tour $\pi$, we call \texttt{racetrack($\pi$)} the set of optimal trajectories realizing the tour $\pi$. It is important to keep in mind that the optimality of trajectories in \texttt{racetrack($\pi$)} is only relative to $\pi$, it does not imply that these trajectories are optimal solutions to the problem instance (if $\pi$ is itself suboptimal).
If no other tour than $\pi$ exists that admit a shorter realizing trajectory, then, in this case only, a trajectory in \texttt{racetrack($\pi$)} is an \emph{optimal solution} for the \VectorTSP problem.
\medskip

\section{Intermediate results and observations}
\label{sec:basic-results}

In this section, we establish a number of preliminary results that highlight the differences between \VectorTSP and \EuclideanTSP. We also present key properties of the configuration graph, used in subsequent sections.

\subsection{Differences between \EuclideanTSP and \VectorTSP}

\VectorTSP having essentially the same input as \EuclideanTSP, namely, a set of coordinates, it is tempting to see it as just a small variant of that problem. Both problems are however truly different.

To start, observe that the initial city does matter in \VectorTSP. This is easy to see on a set of aligned cities. If the vehicle starts at one of the endpoint cities, then it is able to visit all the cities and return after two phases of acceleration and deceleration, whereas if it starts in the middle, at least three such phases will be needed, resulting in a solution of higher cost despite the fact that the traveled distance is the same.

Beyond the starting point, we now show that optimal tours for \EuclideanTSP may be suboptimal for \VectorTSP.

\begin{lemma}
	\label{lem:opt_visit_order}
	There exists an instance $\mathcal{I}$ of \VectorTSP, such that the optimal tour $\pi$ for the corresponding \EuclideanTSP instance is suboptimal for $\mathcal{I}$.
\end{lemma}

\begin{proof}
	We provide a simple example in \cref{fig:opt_ETSP_vs_opt_VTSP}. On the left, a trajectory in \texttt{racetrack($\pi$)}, where $\pi$ is an optimal tour for \EuclideanTSP, starting and ending at $p_0$ (whence the final deceleration loop). Realizing this trajectory requires $22$ moves, but an optimal trajectory for \VectorTSP takes only $20$ (right picture) and corresponds to a tour that would be suboptimal for \EuclideanTSP. 
\end{proof}

\begin{figure}[h]
	\centering
	\includegraphics[width=\textwidth-.5cm]{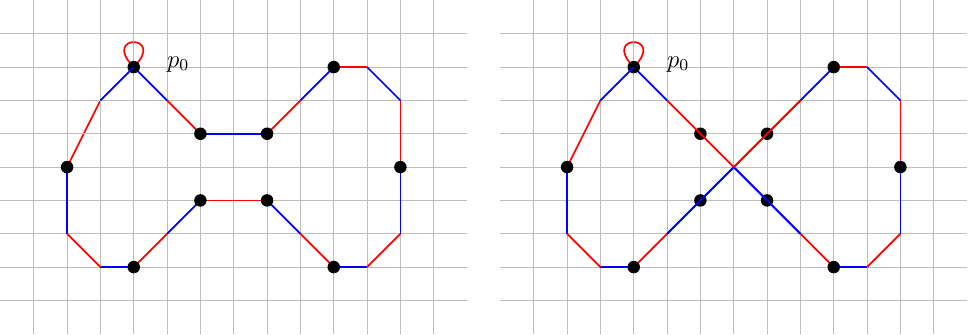}
	\caption{On the left, a trajectory in \texttt{racetrack($\pi$)}, where $\pi$ is an optimal tour for \EuclideanTSP; on the right, an optimal \VectorTSP solution using less vectors along a different tour. The loop on $p_0$ illustrates the final ``move'' required to finish at zero speed. The alternation of colors is used to help distinguish individual moves.
		\label{fig:opt_ETSP_vs_opt_VTSP}}
            \end{figure}

          From the same example, observe that optimal \VectorTSP solution may self-cross. In fact, the cost of traveling between two given cities in \VectorTSP depends crucially on choices that are made earlier in the trajectory. This feature, together with the fact that traveling some distance costs essentially the square root of that distance, makes  \VectorTSP significantly different from \EuclideanTSP and not reducible to it (at least, not straightforwardly), nor even to general \TSP (whose costs between cities are fixed).

          Later on, in this paper, we give evidence that this divergence between both problems is not occasional. For sufficiently large random instances, optimal \EuclideanTSP tours become \emph{typically} suboptimal for \VectorTSP.

\subsection{The configuration graph is (in effect) bounded}

In theory, the positions and velocities are unbounded, which makes the configuration graph infinite. Nevertheless, we show below that the \emph{useful part} of the configuration graph is bounded. Moreover, an optimal trajectory can always be found within a subgraph of the configuration graph whose size is polynomial in the size of the input.

\begin{lemma}[Length of an optimal solution]
  \label{lem:walk}
  Let $\cal I$ be a \VectorTSP instance based on a set $P$ of $n$ cities, let $L$ be the maximum distance in a dimension between two cities of~$P$ (euclidean max norm).
  An optimal solution contains at most $O(nL)$ configurations.
\end{lemma}

\begin{proof}
  A naive (suboptimal) solution consists of visiting the cities in an arbitrary order at \emph{unit speed} (i.e. the max norm of the speed is one). Such a trajectory contains $O(nL)$ configurations, which is thus an upper bounds on the size of an optimal solution.
  \end{proof}

  \Cref{lem:walk} implies that the useful part of the configuration graph can be restricted to a subgraph of \emph{polynomial} size in the size of the input.
  
\begin{lemma}[Bounds on the configuration graph]
  \label{lem:bounded-configuration-graph}
  Let $\cal I$ be a ($d$-dimensional) \VectorTSP instance based on a set $P$ of $n$ cities and a starting city $p_0 \in P$. Let $L$ be the maximum distance between two cities of~$P$ in a dimension (euclidean max norm). An optimal trajectory for $\cal I$ can be found in a subgraph of $G$ that contains at most $O((nL)^{3d})$ configurations.
\end{lemma}

\begin{proof}
  For simplicity, consider a renormalized instance in which the coordinate of $p_0$ is $0$ in each dimension. By \Cref{lem:walk}, an optimal solution has at most $O(nL)$ configurations, thus the maximum absolute value of a position in any dimension is at most $O((nL)^2)$, which is attained, for example, if one accelerates all the time in this dimension. Similarly, the maximum absolute value of the velocity in any dimension is $O(nL)$, which implies at most $O((nL)^3)$ values for the combined position and velocity in each dimension, and thus at most $O((nL)^{3d})$ values in $d$ dimensions.
\end{proof}

\begin{corollary}
  \label{lem:bounded-configuration-graph-2d}
  In two dimensions, an optimal trajectory can be found in a subgraph of the configuration graph containing $O((nL)^{6})$ vertices and edges.
\end{corollary}

\subsection{One-dimensional cost computation}
\label{subsec:unidimensional}

Some of the computation in the subsequent sections rely on computing the cost of trajectories in one dimension. Here, we first prove a direct formula for expressing the cost of moving from one position to another in one dimension, starting and finishing at zero speed. Then, we explain how more general one-dimensional computation reduces to it. 

\begin{lemma}
	In one dimension, the cost of covering $k$ space units, starting and finishing at zero speed is $\lceil 2\sqrt{k} \rceil$.
	\label{lem:onedimensioncost}
\end{lemma}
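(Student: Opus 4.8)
The plan is to reduce the statement to a purely combinatorial optimization over one-dimensional velocity profiles and then solve that optimization exactly. First I would fix notation: a unidimensional trajectory covering distance $d$ with $k$ vectors corresponds to a sequence of velocities $u_0, u_1, \dots, u_k$ with $u_0 = u_k = 0$ (start and end at rest), $|u_t - u_{t-1}| \le 1$ for all $t$ (the racetrack acceleration constraint restricted to one dimension), and total displacement $\sum_{t=1}^{k} u_t = d$ (recall that in the model the displacement of the $t$-th vector equals the new velocity $u_t$). The cost is the number of vectors $k$, so the lemma amounts to showing that the least $k$ admitting such a profile equals $\lceil 2\sqrt{d} \rceil$.

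The central quantity is $\mathrm{maxdist}(k)$, the largest distance reachable with $k$ vectors subject to these constraints, and I would prove $\mathrm{maxdist}(k) = \lfloor k^2/4 \rfloor$. For the upper bound, note that any valid profile satisfies $u_t \le \min(t, k-t)$: starting from $u_0=0$ with increments of at most $1$ forces $u_t \le t$, and returning to $u_k=0$ forces $u_t \le k-t$. Summing yields $\sum_{t} u_t \le \sum_{t=0}^{k} \min(t, k-t) = \lfloor k^2/4 \rfloor$, a short computation that I would carry out separately for even $k=2m$ (giving $m^2$) and odd $k=2m+1$ (giving $m(m+1)$). For the matching lower bound, the explicit \emph{triangular} profile $u_t = \min(t, k-t)$ is valid and attains this sum exactly.

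Next I would establish that the set of achievable distances for a fixed $k$ is exactly the contiguous range $\{0, 1, \dots, \lfloor k^2/4 \rfloor\}$. The key is a local decrement argument: given any valid profile with positive sum $s$, let $h = \max_t u_t \ge 1$ and pick any interior index $t^\ast$ with $u_{t^\ast} = h$ (interior because $u_0 = u_k = 0 < h$). Lowering $u_{t^\ast}$ to $h-1$ decreases the total by exactly $1$ and preserves validity, since each neighbour, being at most the maximum and within $1$ of $h$, lies in $\{h-1, h\}$, so no adjacent difference ever reaches $2$. Iterating this decrement from the triangular profile down to the all-zero profile realizes every intermediate value. Consequently, $d$ is coverable with $k$ vectors if and only if $\lfloor k^2/4 \rfloor \ge d$, which, as $d$ is an integer, is equivalent to $k^2 \ge 4d$, i.e. to $k \ge \lceil 2\sqrt{d} \rceil$. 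The minimum such $k$ is therefore $\lceil 2\sqrt{d} \rceil$, as claimed.

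The only delicate point—and the step I expect to require the most care—is the contiguity argument: the clean \emph{lower the peak by one} operation works precisely because the step constraint pins the neighbours of a maximal velocity into $\{h-1, h\}$, so the modification can never introduce a forbidden jump of size $2$. I would therefore state this neighbour bound explicitly before invoking the decrement. Everything else—the two closed-form evaluations of $\mathrm{maxdist}(k)$ and the final rounding from $k^2 \ge 4d$ to $k \ge \lceil 2\sqrt{d}\rceil$—is routine arithmetic.
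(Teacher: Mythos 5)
Your proof is correct, and it takes a genuinely different route from the paper's. The paper works with $d$ fixed: it builds a symmetric accelerate/decelerate trajectory whose largest vector has length $s$, covering $s(s+1)$ units, then splits into three cases according to whether $d=s(s+1)$, $s(s+1)<d\le (s+1)^2$, or $(s+1)^2<d<(s+1)(s+2)$, patches the leftover gap by inserting or duplicating one or two vectors, and matches the resulting counts $2s+1$, $2s+2$, $2s+3$ against $\lceil 2\sqrt{d}\rceil$ via inequalities such as $s^2<d<(s+0.5)^2$. You instead fix the number of vectors $k$ and characterize exactly which distances are reachable: at most $\lfloor k^2/4\rfloor$ (via the pointwise bound $u_t\le\min(t,k-t)$), exactly $\lfloor k^2/4\rfloor$ (triangular profile), and everything below (peak-lowering contiguity), and then invert the formula. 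Your organization buys two things: (i) a fully rigorous impossibility direction --- the paper's lower-bound claims (``a trajectory of $2s+1$ vectors can clearly not exist'', ``$d'$ exceeds the size allowed'') are asserted rather than derived, whereas your one-line summation bound $\sum_t u_t\le\sum_t\min(t,k-t)=\lfloor k^2/4\rfloor$ proves them; and (ii) a stronger statement, namely that the set of distances coverable with exactly $k$ vectors is the full interval $\{0,\dots,\lfloor k^2/4\rfloor\}$, with no case analysis on $d$ at all. What the paper's version buys in exchange is an explicit closed-form description of an optimal trajectory for the given $d$ (which vectors to use and where the gap is patched), which is what its figure depicts; your argument also produces an optimal profile, but via iterated decrements from the triangular profile rather than directly. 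It is worth noting that your peak-lowering operation is precisely the rigorous counterpart of the paper's informal ``repeat a vector and shift the subsequent ones'' patching step, so the two proofs are consistent where they overlap.
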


\begin{proof}
	Starting at zero speed, a vehicle accelerating over $s$ steps covers a distance of $1+2+\dots+s = s(s+1)/2$ units. Symmetrically, a vehicle decelerating down to zero speed over $s$ steps also covers $s(s+1)/2$ units. Thus, if $k=s(s+1)$ for some integer~$s$, a strategy is to accelerate for~$s$ steps and then decelerate for $s$ steps, which maximizes the average speed and is thus optimal (see the top scenario in~\Cref{fig:min}). Together with the final loop for reaching zero speed, this corresponds to a total cost of $2s + 1$. And since $s^2 < k < (s + \tfrac{1}{2})^2$, this corresponds to a cost of $\lceil 2 \sqrt{k} \rceil$.
	\medskip
	
	\noindent
	If $k$ cannot be expressed as the product of two consecutive integers, then let $s$ be the largest integer such that $s(s+1) < k$. There are two cases:
	\begin{enumerate}
		\item If $s(s+1) < k \leq (s+1)^2$, then
		$s^2 + s + 1 \leq k \leq s^2 + 2s + 1$ (since $k$ is an
		integer). In this case, there is a non-empty gap in the
		middle and a trajectory of cost $2s+1$ (including the final
		loop) cannot exist. However, we will show that a trajectory
		of cost $2s+2$ always exist in this case. Let $k' \leq s+1$
		be the size of the gap ($k'=1$ in \Cref{fig:min}, middle
		scenario). One can complete the trajectory by adding a
		single additional move of length $k'$. If $k' \geq s-1$, this
		move can be inserted in the middle. If $k' < s - 1$, then a
		move of length $k'$ already exists in the deceleration part.
		It suffices to
		repeat this move and shift the previous ones accordingly
		(dashed green vector in~\cref{fig:min}, bottom scenario) in
		order to obtain a trajectory of cost $2s+2$. Since
		$(s + 0.5)^2 < k \leq (s+1)^2$, this cost amounts again to
		$\lceil 2\sqrt{k} \rceil$.
		\item If $(s+1)^2 < k < (s+1)(s+2)$, then
		$s^2 + 2s + 2 \leq k \leq s^2 + 3s + 1$ (since $k$ is an
		integer). In this case, there is a gap of $k'$ units, with
		$s + 2 \leq k' \leq 2s + 1$. No trajectory with cost $2s+2$
		exists, since $k'$ exceeds the size allowed for an
		additional vector. However, one can complete the trajectory
		by splitting $k'$ in a sum of two numbers $\leq s+1$ and
		insert two additional such moves in the decelerating
		segment. Since $(s + 1)^2 < k \leq (s+1.5)^2$, this amounts
		to a cost of $2s + 3 = \lceil 2\sqrt{k} \rceil$.
	\end{enumerate}
\end{proof}

\begin{figure}[h]
	\begin{center}
		\begin{tikzpicture}
		[scale=.8, inner sep=1mm,
		cherry/.style={circle,draw=black,fill=red},
		blueberry/.style={circle,draw=black,fill=blue}]
		\draw[step=1cm,black!50] (1,-0.2) grid (13,0.2);
		
		\path (1,0) node[circle, fill = white, draw=black, inner sep= .7mm] (X0) {\footnotesize$x$};
		\path (13,0) node[circle, fill = white, draw=black, inner sep= .4mm] (X4) {\footnotesize$p$};
		\draw[red, ->, thick] (X0) to[bend left=70] (1.9,0);
		\draw[red, ->, thick] (2,0) to[bend left=80] (3.9,0);
		\draw[red, ->, thick] (4,0) to[bend left=80] (6.9,0);
		\draw[blue, <-, thick] (X4) to[bend right=70, out=-40] (12.1,0);
		\draw[blue, <-, thick] (12,0) to[bend right=80] (10.1,0);
		\draw[blue, <-, thick] (10,0) to[bend right=80] (7.1,0);
		
		\draw[dotted, thick] (7,1) to (7,-0.25);
		\draw[blue, ->, thick] (X4) to[out = 110, in = 180] (13, .8) to[out = 0, in = 70] (X4);			
		\end{tikzpicture}\medskip\\
		\begin{tikzpicture}
		[scale=.8, inner sep=1mm,
		cherry/.style={circle,draw=black,fill=red},
		blueberry/.style={circle,draw=black,fill=blue}]
		\draw[step=1cm,black!50] (1,-0.2) grid (14,0.2);
		
		\path (1,0) node[circle, fill = white, draw=black, inner sep= .7mm] (X0) {\footnotesize$x$};
		\path (14,0) node[circle, fill = white, draw=black, inner sep= .4mm] (X4) {\footnotesize$p$};
		\draw[red, ->, thick] (X0) to[bend left=70] (1.9,0);
		\draw[red, ->, thick] (2,0) to[bend left=80] (3.9,0);
		\draw[red, ->, thick] (4,0) to[bend left=80] (6.9,0);
		\draw[blue, <-, thick] (X4) to[bend right=70, out=-40] (13.1,0);
		\draw[blue, <-, thick] (13,0) to[bend right=80] (11.1,0);
		\draw[blue, <-, thick] (11,0) to[bend right=80] (8.1,0);
		
		\draw[dotted, thick] (7.5,1) to (7.5,-0.25);
		\draw[blue, ->, thick] (X4) to[out = 110, in = 180] (14, .8) to[out = 0, in = 70] (X4);			
		\end{tikzpicture}\\
		\begin{tikzpicture}
		[scale=.8, inner sep=1mm,
		cherry/.style={circle,draw=black,fill=red,thick},
		blueberry/.style={circle,draw=black,fill=blue,thick},
		every loop/.style={min distance=8mm,in=60,out=120,looseness=10}]
		\draw[step=1cm,black!50] (1,-0.2) grid (14,0.2);
		
		\path (1,0) node[circle, fill = white, draw=black, inner sep= .7mm] (a) {\footnotesize$x$};
		\path (14,0) node[circle, fill = white, draw=black, inner sep= .4mm] (bb) {\footnotesize$p$};
		
		\draw[red, ->, thick] (X0) to[bend left=70] (1.9,0);
		\draw[black!40!green, dashed, ->, very thick] (12,0) to [bend left=70] (12.9,0);
		\draw[red, ->, thick] (2,0) to[bend left=80] (3.9,0);
		\draw[red, ->, thick] (4,0) to[bend left=80] (6.9,0);
		\draw[blue, ->, thick] (13,0) to[bend left=70, in=140] (bb);
		\draw[blue, ->, thick] (10,0) to[bend left=80] (11.9,0);
		\draw[blue, ->, thick] (7,0) to[bend left=80] (9.9,0);
		
		\draw[blue, ->, thick] (bb) to[out = 110, in = 180] (14, .8) to[out = 0, in = 70] (bb);
		\end{tikzpicture}
	\end{center}
	\caption{Computation of an optimal (one dimensional) trajectory for $k = 12$.}
	\label{fig:min}
\end{figure}

Observe, from the previous lemma, that an optimal trajectory between two points in one dimension, starting and finishing at zero speed, can always be realized by first accelerating as much as possible without overpassing the middle point, then filling the potential gaps upon deceleration.
Another case of interest, used in~\Cref{subsec:A-star}, is when the vehicle (with position $x$ and speed $dx$) is moving towards point $x'$, and it is able to stop before bypassing $x'$. The strategy in this case is to simulate that the vehicle has accelerated continuously $dx$ time steps from a virtual starting point located at $x''=x - dx(dx+1)/2$ and to compute the cost from $x''$ to $x'$ starting and finishing at zero speed. By the above remark, the resulting cost (minus the $dx$ virtual steps) must correspond to the optimal cost towards $x'$ from the current position and speed. 

Also remark that for a pair of cities, placed in any number of dimensions, the cost when starting and stopping at zero speed, is $\lceil 2\sqrt{K} \rceil$ where $K$ is the largest distance among all dimensions.

\section{NP-hardness of \VectorTSP}
\label{sec:complexity}

Although the result itself is not surprising, proving that \VectorTSP is NP-hard does not appear to be straightforward. In this section, we do so in two steps. First, we prove that \VectorTSPWS is NP-hard (\Cref{theorem:VTSP0_nphard}), by adapting some ideas from Papadimitriou~\cite{papadimitriou1977euclidean} in his proof that \EuclideanTSP is NP-hard, via a reduction from \ExactCover. The main conceptual difficulty in this adaptation is to account for the fact that acceleration allows one to travel quadratic distance in linear time, and counter rounding effects by a scaling argument in continuous space (\Cref{theorem:CVTSP0_nphard}). Then, we show that \VectorTSPWS reduces to \VectorTSP (\Cref{theorem:VTSPinf_nphard}).

\subsection{NP-hardness of \VectorTSPWS}
\label{subsec:NPhard_VTSP0}

We choose to do most of the reasoning in this section on a slightly modified version of \VectorTSPWS, referred to as \ContinuousVectorTSPWithStops. 

  \begin{mdframed}
    \underline{\ContinuousVectorTSPWithStops}\smallskip\\
		\textbf{Input:} A set of $n$ cities (points) $P \subseteq \mathbb{Z}^d$, with intercity cost $c(p_i, p_j) = 2\sqrt{K}$ where $K$ is the largest distance between the cities among all dimensions, and an integer $k$. \smallskip\\
\textbf{Question:} Does there exist a trajectory ${\cal T}=(c_1, \dots, c_k)$ of cost at most $k$ that visits all the cities in $P$, with $pos(c_1)=pos(c_k)=p_0$ and $vel(c_1)=vel(c_k)=\overset{\to}{0}$.
	\end{mdframed} 

\ContinuousVectorTSPWithStops allows us to avoid some technical details of \VectorTSPWS, such as the rounding effects on costs between cities. These effects are taken care of later through a scaling argument which transfers NP-hardness results from \ContinuousVectorTSPWithStops to \VectorTSPWS. This version is closer in spirit to the \EuclideanTSP, allowing us to adapt Papadimitriou's NP-hardness proof for \EuclideanTSP~\cite{papadimitriou1977euclidean}. Note that we can talk about paths for \ContinuousVectorTSPWithStops instead of trajectories (as there is just the notion of a cost between cities, not of rounds/vectors). In the case of \VectorTSPWS, either paths or trajectories can be considered.

The reduction is from \ExactCover, whose definition is as follows. 
Let ${\cal U}$ be a set of $m$ elements (the {\em universe}), \ExactCover takes as input a set ${\cal F}=\{F_i\}$ of $n$ subsets of ${\cal U}$, and asks if there exists ${\cal F'} \subseteq {\cal F}$ such that all sets in ${\cal F'}$ are {\em disjoint} and ${\cal F'}$ covers all the elements of ${\cal U}$.
For example, if ${\cal U}=\{1, 2, 3\}$ and ${\cal F}=$ $\{\{1,2\}, \{3\}, \{2,3\}\}$, then ${\cal F}'= \{\{1,2\},\{3\}\}$ is a valid solution, but $\{\{1,2\},\{2,3\}\}$ is not. 
\medskip

\textbf{High-level view of the reduction:}
Given an instance \I of \ExactCover, we construct an instance \I' of \ContinuousVectorTSPWithStops such that \I admits a solution if and only if there is a trajectory visiting all the cities of \I' using at most a prescribed number of vectors.
$\I'$ is composed of several types of gadgets, representing respectively the subsets $F_i\in {\cal F}$ and the elements of ${\cal U}$ (with some repetition). For each $F_i$, a subset gadget $C_i$ is created which consists of a number of cities placed horizontally (wavy horizontal segments in Figure~\ref{fig:papa_construction}). For now, we state that each gadget can be traversed optimally in exactly two possible ways (without considering direction), which ultimately corresponds to including (mode 1) or excluding (mode 2) subset $F_i$ in the \ExactCover solution. 
The $C_{i}$'s are located one below another, starting with $C_1$ at the top. Between every two consecutive gadgets $C_i$ and $C_{i+1}$, copies of {\em element} gadgets are placed for each element in ${\cal U}$, thus the element gadgets $H_{i,j}$ are indexed by both $1 \leq i \leq n-1$ and $1 \leq j \leq m$ (see again Figure~\ref{fig:papa_construction}). The element gadgets are also made of a number of cities, whose particular organization is described later on. Finally, every subset gadget $C_i$ above or below an element gadget representing element $j$ is modified in a way that represents whether $F_i$ contains element $j$ or not.

Intuitively, a tour visiting all the cities must choose between inclusion or exclusion of each $F_i$ ({\it i.e.}, mode 1 or 2 for each $C_i$). An element $j \in {\cal U}$ is considered as covered by a subset $F_i$ if $C_i$ does {\em not} visit any of the adjacent gadgets representing $j$.
Each element gadget $H_{i,j}$ must be visited either from above (from $C_i$) or from below (from $C_{i+1}$). The number of subset gadgets is $n$, the number of element gadgets for each element is $n-1$ (one between every two consecutive subset gadgets), and the construction guarantees that at most one element gadget for each element $j\in {\cal U}$ is visited from a subset gadget $C_i$ (or the tour is suboptimal). These three properties collectively imply that for each element $j\in {\cal U}$, there is exactly one subset gadget $C_i$ that does not visit any of the element gadgets representing $j$. 

\begin{figure}[h]
	\begin{center}
		\tikzset{zigzag/.style={decorate, decoration={zigzag,amplitude=1}}}
		\begin{tikzpicture}[scale=0.5]
		\path (0, 0) node[circle, fill = white, draw=black, inner sep=.5mm] (a) {};
		\path (1, 0) node[circle, fill = white, draw=black, inner sep=.5mm] (a) {};
		\path (2, 0) node[circle, fill = white, draw=black, inner sep=.5mm] (a) {};
		\path (0, -1) node[circle, fill = white, draw=black, inner sep=.5mm] (a) {};
		\path (0, -2) node[circle, fill = white, draw=black, inner sep=.5mm] (a) {};
		\path (0, -3) node[circle, fill = white, draw=none, rotate=90, inner sep=.5mm] (a) {...};
		\path (3, 0) node[circle, fill = white, draw=none, inner sep=.5mm] (a) {...};
		
		\path (0, -4) node[circle, fill = white, draw=none, inner sep=.5mm] (below) {};
		
		\path (20, 0) node[circle, fill = white, draw=none, inner sep=0mm] (right) {};
		\path (20, -3) node[circle, fill = white, draw=none, inner sep=0mm] (right2) {};
		\path (20, -6) node[circle, fill = white, draw=none, inner sep=0mm] (right3) {};
		\path (20, -9) node[circle, fill = white, draw=none, inner sep=0mm] (right4) {};
		\path (20, -12) node[circle, fill = white, draw=none, inner sep=0mm] (right5) {};
		\path (20, -15) node[circle, fill = white, draw=none, inner sep=0mm] (right6) {};
		
		\path (4, 0) node[circle, fill = white, draw=none, inner sep=0mm] (left) {};
		\path (4, -3) node[circle, fill = none, draw=none, inner sep=0mm] (left2) {};
		\path (4, -6) node[circle, fill = white, draw=none, inner sep=0mm] (left3) {};
		\path (4, -9) node[circle, fill = white, draw=none, inner sep=0mm] (left4) {};
		\path (4, -12) node[circle, fill = white, draw=none, inner sep=.5mm] (left5) {};
		\path (4, -15) node[circle, fill = white, draw=none, inner sep=0mm] (left6) {};
		\path (0, -15) node[circle, fill = white, draw=none, inner sep=0mm] (left6b) {};
		
		\path (4, -10.5) node[circle, fill = white, draw=none, inner sep=.5mm, rotate=90] (dots) {...};
		
		\path (12, .5) node[circle, fill = white, draw=none, inner sep=.5mm] (dots) {\small $C_1$};
		\path (12, -2.5) node[circle, fill = white, draw=none, inner sep=.5mm] (dots) {\small $C_2$};
		\path (12, -5.5) node[circle, fill = white, draw=none, inner sep=.5mm] (dots) {\small $C_3$};
		\path (12, -8.5) node[circle, fill = white, draw=none, inner sep=.5mm] (dots) {\small $C_4$};
		\path (12, -11.5) node[circle, fill = white, draw=none, inner sep=.5mm] (dots) {\small $C_{n-1}$};
		\path (12, -14.5) node[circle, fill = white, draw=none, inner sep=.5mm] (dots) {\small $C_n$};

		//drawing lines
		\tikzstyle{every path}=[zigzag]
		\draw (left)  to (right);
		\draw (left2)  to (right2);
		\draw (left3)  to (right3);
		\draw (left4)  to (right4);
		\draw (right5)  to (left5);
		\draw (right6)  to (left6);
		\tikzstyle{every path}=[]
		\draw (left2)  to (left3);
		\draw (right)  to (right2);
		\draw (right3)  to (right4);
		\draw (right5)  to (right6);
		\draw (below)  to (left6b);
		\draw (left6)  to (left6b);
		
		\path (7, -1.5) node[rectangle, fill = white, draw=black, inner sep=.5mm] (dots) {\small $H_{1,1}$};
		\path (10, -1.5) node[rectangle, fill = white, draw=black, inner sep=.5mm] (dots) {\small $H_{1,2}$};
		\path (18, -1.5) node[rectangle, fill = white, draw=black, inner sep=.5mm] (dots) {\small $H_{1,m}$};
		\path (14, -1.5) node[rectangle, fill = white, draw=none, inner sep=.5mm] (dots) {\small $...$};
		
		\path (7, -4.5) node[rectangle, fill = white, draw=black, inner sep=.5mm] (dots) {\small $H_{2,1}$};
		\path (10, -4.5) node[rectangle, fill = white, draw=black, inner sep=.5mm] (dots) {\small $H_{2,2}$};
		\path (18, -4.5) node[rectangle, fill = white, draw=black, inner sep=.5mm] (dots) {\small $H_{2,m}$};
		\path (14, -4.5) node[rectangle, fill = white, draw=none, inner sep=.5mm] (dots) {\small $...$};
		
		\path (7, -7.5) node[rectangle, fill = white, draw=black, inner sep=.5mm] (dots) {\small $H_{3,1}$};
		\path (10, -7.5) node[rectangle, fill = white, draw=black, inner sep=.5mm] (dots) {\small $H_{3,2}$};
		\path (18, -7.5) node[rectangle, fill = white, draw=black, inner sep=.5mm] (dots) {\small $H_{3,m}$};
		\path (14, -7.5) node[rectangle, fill = white, draw=none, inner sep=.5mm] (dots) {\small $...$};
		
		\path (6.5, -13.5) node[rectangle, fill = white, draw=black, inner sep=.5mm] (dots) {\small $H_{n-1,1}$};
		\path (10.5, -13.5) node[rectangle, fill = white, draw=black, inner sep=.5mm] (dots) {\small $H_{n-1,2}$};
		\path (17.5, -13.5) node[rectangle, fill = white, draw=black, inner sep=.5mm] (dots) {\small $H_{n-1,m}$};
		\path (14, -13.5) node[rectangle, fill = white, draw=none, inner sep=.5mm] (dots) {$...$};
		\end{tikzpicture}
	\end{center}
	\caption{Papadimitriou's high-level construction}
	\label{fig:papa_construction}
\end{figure}

In summary, the tour proceeds from the top left corner through the $C_i$s (in order), visiting all the $H_{i,j}$. So long as a $C_i$ visits a $H_{i,j}$ (thus, from above), this means that element $j$ has not yet been covered in the \ExactCover solution. Element $j$ is covered by subset $F_i$ in the \ExactCover solution if $C_i$ is the subset gadget that does {\em not} visit the corresponding $H_{i,j}$, after which all the $H_{k \geq i,j}$ will necessarily be visited from below by the corresponding $C_{k+1}$, ensuring the element $j$ will not be covered more than once.\\

\textbf{\ConstructionOne:}
Let us now present the details of the gadgets and how to assemble them together, so as to transform any given \ExactCover instance into a \ContinuousVectorTSPWithStops instance\footnote{The cost $k$ will be missing from the instance, but we will specify this cost in \Cref{theorem:CVTSP0_nphard}.}.
Consider the \textit{1-chain} structure presented in Figure~\ref{fig:1-chain}. This structure is composed of cities positioned on a line, at distance 1 from one another. 1-chains can bend at 90 degrees angles.
Next, consider the structure in Figure~\ref{fig:2-chain}, referred to as a \textit{2-chain}. The distance between the leftmost (or rightmost) city and its nearby cities is $\frac{\sqrt{2}}{2}$. The closest distance between vertical cities is 1, and 2 between horizontal cities. \\
Finally, consider the $A$, $B$, and $H$ structures, presented in \Cref{fig:A,fig:B,fig:H} respectively. Structures $A$ and $B$ will replace parts of 2-chains as shown, and consist of cities positioned on a grid with distance 1 between nearest cities, except for the cities closest to the middle, which are horizontally distanced by 2 and 6 respectively. Structure $H$ consists of a symmetric set of cities which will be positioned far away from all other gadgets.

\begin{figure}[]
	\begin{center}
		\begin{subfigure}[b]{0.32\textwidth}
			\begin{center}
				\begin{tikzpicture}
				[scale=.5, inner sep=1mm,
				cherry/.style={circle,draw=black,fill=red,thick},
				blueberry/.style={circle,draw=black,fill=blue,thick},
				every loop/.style={min distance=8mm,in=60,out=120,looseness=10}]
				
				\path (-1,0) node[circle, fill = white, draw=black, inner sep= .7mm] (a') {};
				\path (0,0) node[circle, fill = white, draw=black, inner sep= .7mm] (a) {};
				\path (1,0) node[circle, fill = white, draw=black, inner sep= .7mm] (b) {};
				\path (5,0) node[circle, fill = white, draw=black, inner sep= .7mm] (c) {};
				\path (6,0) node[circle, fill = white, draw=black, inner sep= .7mm] (d) {};
				\path (6,-1) node[circle, fill = white, draw=black, inner sep= .7mm] (e) {};
				\path (6,-5) node[circle, fill = white, draw=black, inner sep= .7mm] (f) {};
				\path (6,-6) node[circle, fill = white, draw=black, inner sep= .7mm] (g) {};
				\path (5,-6) node[circle, fill = white, draw=black, inner sep= .7mm] (h) {};
				\path (1,-6) node[circle, fill = white, draw=black, inner sep= .7mm] (i) {};
				\path (0,-6) node[circle, fill = white, draw=black, inner sep= .7mm] (j) {};
				\path (-1,-6) node[circle, fill = white, draw=black, inner sep= .7mm] (j') {};
				
				\path (3,0) node[circle, fill = none, draw=none, inner sep= .7mm] (A) {...};
				\path (3,-6) node[circle, fill = none, draw=none, inner sep= .7mm] (A) {...};
				\path (6,-3) node[circle, fill = none, draw=none, inner sep= .7mm, rotate=90] (A) {...};
				
				\draw[<->] (0,-.7)  to (1, -.7);
				\path (.5, -1.2) node[fill = white, draw=none, inner sep=.5mm] (dots) {\footnotesize $1$};
				\end{tikzpicture}
				\caption{}
			\end{center}
		\end{subfigure}
		\begin{subfigure}[b]{0.32\textwidth}
			\begin{center}
				\begin{tikzpicture}
				[scale=.5, inner sep=1mm,
				cherry/.style={circle,draw=black,fill=red,thick},
				blueberry/.style={circle,draw=black,fill=blue,thick},
				every loop/.style={min distance=8mm,in=60,out=120,looseness=10}]
				
				\path (-1,0) node[circle, fill = white, draw=black, inner sep= .7mm] (a') {};
				\path (0,0) node[circle, fill = white, draw=black, inner sep= .7mm] (a) {};
				\path (1,0) node[circle, fill = white, draw=black, inner sep= .7mm] (b) {};
				\path (5,0) node[circle, fill = white, draw=black, inner sep= .7mm] (c) {};
				\path (6,0) node[circle, fill = white, draw=black, inner sep= .7mm] (d) {};
				\path (6,-1) node[circle, fill = white, draw=black, inner sep= .7mm] (e) {};
				\path (6,-5) node[circle, fill = white, draw=black, inner sep= .7mm] (f) {};
				\path (6,-6) node[circle, fill = white, draw=black, inner sep= .7mm] (g) {};
				\path (5,-6) node[circle, fill = white, draw=black, inner sep= .7mm] (h) {};
				\path (1,-6) node[circle, fill = white, draw=black, inner sep= .7mm] (i) {};
				\path (0,-6) node[circle, fill = white, draw=black, inner sep= .7mm] (j) {};
				\path (-1,-6) node[circle, fill = white, draw=black, inner sep= .7mm] (j') {};
				
				\draw (a') to (a);
				\draw (a) to (b);
				\draw (b) to (2,0);
				\draw (4,0) to (c);
				\draw (c) to (d);
				\draw (d) to (e);
				\draw (e) to (6,-2);
				\draw (6,-4) to (f);
				\draw (f) to (g);
				\draw (g) to (h);
				\draw (h) to (4,-6);
				\draw (2,-6) to (i);
				\draw (i) to (j);
				\draw (j) to (j');
				
				\path (3,0) node[circle, fill = none, draw=none, inner sep= .7mm] (A) {...};
				\path (3,-6) node[circle, fill = none, draw=none, inner sep= .7mm] (A) {...};
				\path (6,-3) node[circle, fill = none, draw=none, inner sep= .7mm, rotate=90] (A) {...};
				\end{tikzpicture}
				\caption{}
			\end{center}
		\end{subfigure}
		\begin{subfigure}[b]{0.32\textwidth}
			\begin{center}
				\begin{tikzpicture}
				[scale=.5, inner sep=1mm,
				cherry/.style={circle,draw=black,fill=red,thick},
				blueberry/.style={circle,draw=black,fill=blue,thick},
				every loop/.style={min distance=8mm,in=60,out=120,looseness=10}]
				
				\path (-1,0) node[circle, fill = white, draw=black, inner sep= .7mm] (a') {};
				\path (6,0) node[circle, fill = white, draw=none, inner sep= .1mm] (d) {};
				\path (6,-6) node[circle, fill = white, draw=none, inner sep= .1mm] (g) {};
				\path (-1,-6) node[circle, fill = white, draw=black, inner sep= .7mm] (j') {};
				
				\draw (a') to (d);
				\draw (d) to (g);
				\draw (g) to (j');
				\end{tikzpicture}
				\caption{}
			\end{center}
		\end{subfigure}
	\end{center}
	\caption{1-chain structure which turns $90^{\circ}$ twice. The distance between consecutive cities is~1. The optimal visit order is shown in $(b)$. We abbreviate a 1-chain schematically as shown in $(c)$.
		\label{fig:1-chain}}
	\begin{center}
		\begin{subfigure}[b]{0.49\textwidth}
			\begin{center}
				\begin{tikzpicture}
					[scale=.45, inner sep=1mm,
					cherry/.style={circle,draw=black,fill=red,thick},
					blueberry/.style={circle,draw=black,fill=blue,thick},
					every loop/.style={min distance=8mm,in=60,out=120,looseness=10}]
					
					\path (0,0) node[circle, fill = white, draw=black, inner sep= .7mm] (a) {};
					\path (1,.6) node[circle, fill = white, draw=black, inner sep= .7mm] (b) {};
					\path (1,-.6) node[circle, fill = white, draw=black, inner sep= .7mm] (c) {};
					\path (3,.6) node[circle, fill = white, draw=black, inner sep= .7mm] (d) {};
					\path (3,-.6) node[circle, fill = white, draw=black, inner sep= .7mm] (e) {};
					\path (5,.6) node[circle, fill = white, draw=black, inner sep= .7mm] (f) {};
					\path (5,-.6) node[circle, fill = white, draw=black, inner sep= .7mm] (g) {};
					
					\path (9,.6) node[circle, fill = white, draw=black, inner sep= .7mm] (h) {};
					\path (9,-.6) node[circle, fill = white, draw=black, inner sep= .7mm] (i) {};
					\path (10,0) node[circle, fill = white, draw=black, inner sep= .7mm] (j) {};
					
					\path (7,0) node[circle, fill = none, draw=none, inner sep= .7mm] (A) {...};
					
					\draw[<->] (1,1.7)  to (3, 1.7);
					\path (2, 2.2) node[fill = white, draw=none, inner sep=.5mm] (dots) {\footnotesize $2$};
				\end{tikzpicture}
				\caption{}
			\end{center}
		\end{subfigure}
		\begin{subfigure}[b]{0.49\textwidth}
			\begin{center}
				\begin{tikzpicture}
					[scale=.45, inner sep=1mm,
					cherry/.style={circle,draw=black,fill=red,thick},
					blueberry/.style={circle,draw=black,fill=blue,thick},
					every loop/.style={min distance=8mm,in=60,out=120,looseness=10}]
					
					\path (0,0) node[circle, fill = white, draw=black, inner sep= .7mm] (a) {};
					\path (1,.6) node[circle, fill = white, draw=black, inner sep= .7mm] (b) {};
					\path (1,-.6) node[circle, fill = white, draw=black, inner sep= .7mm] (c) {};
					\path (3,.6) node[circle, fill = white, draw=black, inner sep= .7mm] (d) {};
					\path (3,-.6) node[circle, fill = white, draw=black, inner sep= .7mm] (e) {};
					\path (5,.6) node[circle, fill = white, draw=black, inner sep= .7mm] (f) {};
					\path (5,-.6) node[circle, fill = white, draw=black, inner sep= .7mm] (g) {};
					
					\path (7,0) node[circle, fill = none, draw=none, inner sep= .7mm] (A) {...};
					
					\draw (a) to (b);
					\draw (b) to (c);
					\draw (c) to (e);
					\draw (e) to (d);
					\draw (d) to (f);
					\draw (f) to (g);
					\draw (g) to (6,-.6);
				\end{tikzpicture}
				\caption{}
			\end{center}
		\end{subfigure}
		
		\vspace{.5cm}
		
		\begin{subfigure}[b]{0.49\textwidth}
			\begin{center}
				\begin{tikzpicture}
					[scale=.45, inner sep=1mm,
					cherry/.style={circle,draw=black,fill=red,thick},
					blueberry/.style={circle,draw=black,fill=blue,thick},
					every loop/.style={min distance=8mm,in=60,out=120,looseness=10}]
					
					\path (0,0) node[circle, fill = white, draw=black, inner sep= .7mm] (a) {};
					\path (1,.6) node[circle, fill = white, draw=black, inner sep= .7mm] (b) {};
					\path (1,-.6) node[circle, fill = white, draw=black, inner sep= .7mm] (c) {};
					\path (3,.6) node[circle, fill = white, draw=black, inner sep= .7mm] (d) {};
					\path (3,-.6) node[circle, fill = white, draw=black, inner sep= .7mm] (e) {};
					\path (5,.6) node[circle, fill = white, draw=black, inner sep= .7mm] (f) {};
					\path (5,-.6) node[circle, fill = white, draw=black, inner sep= .7mm] (g) {};
					
					\path (7,0) node[circle, fill = none, draw=none, inner sep= .7mm] (A) {...};
					
					\draw (a) to (c);
					\draw (b) to (c);
					\draw (b) to (d);
					\draw (e) to (d);
					\draw (e) to (g);
					\draw (f) to (g);
					\draw (f) to (6,.6);
				\end{tikzpicture}
				\caption{}
			\end{center}
		\end{subfigure}
		\begin{subfigure}[b]{0.49\textwidth}
			\begin{center}
				\begin{tikzpicture}
					[scale=.45, inner sep=1mm,
					cherry/.style={circle,draw=black,fill=red,thick},
					blueberry/.style={circle,draw=black,fill=blue,thick},
					every loop/.style={min distance=8mm,in=60,out=120,looseness=10}]
					
					\path (0,0) node[circle, fill = white, draw=black, inner sep= .7mm] (a) {};
					\path (1,.6) node[circle, fill = white, draw=none, inner sep= .7mm] (b) {};
					\path (1,-.6) node[circle, fill = white, draw=none, inner sep= .7mm] (c) {};
					\path (3,.6) node[circle, fill = white, draw=none, inner sep= .7mm] (d) {};
					\path (3,-.6) node[circle, fill = white, draw=none, inner sep= .7mm] (e) {};
					\path (9,.6) node[circle, fill = white, draw=none, inner sep= .7mm] (f) {};
					\path (9,-.6) node[circle, fill = white, draw=none, inner sep= .7mm] (g) {};
					
					\path (10,0) node[circle, fill = none, draw=black, inner sep= .7mm] (A) {};
					
					\draw (b) to (f);
					\path (5,0) node[circle, fill = none, draw=none, inner sep= .7mm] (A) {\footnotesize $C_{i}$};
					\draw (c) to (g);
				\end{tikzpicture} 
				\caption{}
			\end{center}
		\end{subfigure}
		
	\end{center}
	\caption{2-chain structure $(a)$. A 2-chain has precisely two optimal 1-trajectories, mode 1 shown in $(b)$ and mode 2 shown in $(c)$. We abbreviate a 2-chain schematically as shown in $(d)$. \label{fig:2-chain}}
	\begin{center}
		\begin{subfigure}[b]{0.3\textwidth}
			\begin{center}
				\begin{tikzpicture}
					[scale=.42, inner sep=1mm,
					cherry/.style={circle,draw=black,fill=red,thick},
					blueberry/.style={circle,draw=black,fill=blue,thick},
					every loop/.style={min distance=8mm,in=60,out=120,looseness=10}]
					
					\path (0,0) node[circle, fill = white, draw=black, inner sep= .7mm] (a) {};
					\path (1,0) node[circle, fill = white, draw=black, inner sep= .7mm] (a) {};
					\path (1,1) node[circle, fill = white, draw=black, inner sep= .7mm] (a) {};
					\path (0,1) node[circle, fill = white, draw=black, inner sep= .7mm] (a) {};
					\path (0,3) node[circle, fill = white, draw=black, inner sep= .7mm] (a) {};
					\path (1,3) node[circle, fill = white, draw=black, inner sep= .7mm] (a) {};
					\path (0,4) node[circle, fill = white, draw=black, inner sep= .7mm] (a) {};
					\path (1,4) node[circle, fill = white, draw=black, inner sep= .7mm] (a) {};
					\path (0,6) node[circle, fill = white, draw=black, inner sep= .7mm] (a) {};
					\path (1,6) node[circle, fill = white, draw=black, inner sep= .7mm] (a) {};
					\path (0,7) node[circle, fill = white, draw=black, inner sep= .7mm] (a) {};
					\path (1,7) node[circle, fill = white, draw=black, inner sep= .7mm] (a) {};
					\path (7,6) node[circle, fill = white, draw=black, inner sep= .7mm] (a) {};
					\path (7,7) node[circle, fill = white, draw=black, inner sep= .7mm] (a) {};
					\path (8,6) node[circle, fill = white, draw=black, inner sep= .7mm] (a) {};
					\path (8,7) node[circle, fill = white, draw=black, inner sep= .7mm] (a) {};
					\path (7,3) node[circle, fill = white, draw=black, inner sep= .7mm] (a) {};
					\path (7,4) node[circle, fill = white, draw=black, inner sep= .7mm] (a) {};
					\path (8,3) node[circle, fill = white, draw=black, inner sep= .7mm] (a) {};
					\path (8,4) node[circle, fill = white, draw=black, inner sep= .7mm] (a) {};
					\path (7,0) node[circle, fill = white, draw=black, inner sep= .7mm] (a) {};
					\path (7,1) node[circle, fill = white, draw=black, inner sep= .7mm] (a) {};
					\path (8,0) node[circle, fill = white, draw=black, inner sep= .7mm] (a) {};
					\path (8,1) node[circle, fill = white, draw=black, inner sep= .7mm] (a) {};
					
					\path (-.5,-.5) node[circle, fill = none, draw=none, inner sep= .7mm] (A) {\footnotesize a};
					\path (8.5,-.5) node[circle, fill = none, draw=none, inner sep= .7mm] (A) {\footnotesize a'};
					\path (1.5,-.5) node[circle, fill = none, draw=none, inner sep= .7mm] (A) {\footnotesize b};
					\path (6.5,-.5) node[circle, fill = none, draw=none, inner sep= .7mm] (A) {\footnotesize b'};
					\path (-.5,7.5) node[circle, fill = none, draw=none, inner sep= .7mm] (A) {\footnotesize c};
					\path (8.5,7.5) node[circle, fill = none, draw=none, inner sep= .7mm] (A) {\footnotesize c'};
					\path (1.5,7.5) node[circle, fill = none, draw=none, inner sep= .7mm] (A) {\footnotesize d};
					\path (6.5,7.5) node[circle, fill = none, draw=none, inner sep= .7mm] (A) {\footnotesize d'};
					
					\draw[<->] (0,-.7)  to (1, -.7);
					\path (.5, -1.2) node[fill = white, draw=none, inner sep=.5mm] (dots) {\footnotesize $1$};
					\draw[<->] (-0.8,1)  to (-0.8, 3);
					\path (-1.3, 2) node[fill = white, draw=none, inner sep=.5mm] (dots) {\footnotesize $2$};
					\draw[<->] (1,5.3)  to (7,5.3);
					\path (4, 4.8) node[fill = white, draw=none, inner sep=.5mm] (dots) {\footnotesize $6$};
				\end{tikzpicture}
				\caption{}
			\end{center}
		\end{subfigure}
		\hspace{.5cm}
		\begin{subfigure}[b]{0.3\textwidth}
			\begin{center}
				\begin{tikzpicture}
					[scale=.42, inner sep=1mm,
					cherry/.style={circle,draw=black,fill=red,thick},
					blueberry/.style={circle,draw=black,fill=blue,thick},
					every loop/.style={min distance=8mm,in=60,out=120,looseness=10}]
					
					\path (0,0) node[circle, fill = white, draw=black, inner sep= .7mm] (a) {};
					\path (1,0) node[circle, fill = white, draw=black, inner sep= .7mm] (b) {};
					\path (1,1) node[circle, fill = white, draw=black, inner sep= .7mm] (c) {};
					\path (0,1) node[circle, fill = white, draw=black, inner sep= .7mm] (d) {};
					\path (0,3) node[circle, fill = white, draw=black, inner sep= .7mm] (e) {};
					\path (1,3) node[circle, fill = white, draw=black, inner sep= .7mm] (f) {};
					\path (0,4) node[circle, fill = white, draw=black, inner sep= .7mm] (g) {};
					\path (1,4) node[circle, fill = white, draw=black, inner sep= .7mm] (h) {};
					\path (0,6) node[circle, fill = white, draw=black, inner sep= .7mm] (i) {};
					\path (1,6) node[circle, fill = white, draw=black, inner sep= .7mm] (j) {};
					\path (0,7) node[circle, fill = white, draw=black, inner sep= .7mm] (k) {};
					\path (1,7) node[circle, fill = white, draw=black, inner sep= .7mm] (l) {};
					\path (7,6) node[circle, fill = white, draw=black, inner sep= .7mm] (m) {};
					\path (7,7) node[circle, fill = white, draw=black, inner sep= .7mm] (n) {};
					\path (8,6) node[circle, fill = white, draw=black, inner sep= .7mm] (o) {};
					\path (8,7) node[circle, fill = white, draw=black, inner sep= .7mm] (p) {};
					\path (7,3) node[circle, fill = white, draw=black, inner sep= .7mm] (q) {};
					\path (7,4) node[circle, fill = white, draw=black, inner sep= .7mm] (r) {};
					\path (8,3) node[circle, fill = white, draw=black, inner sep= .7mm] (s) {};
					\path (8,4) node[circle, fill = white, draw=black, inner sep= .7mm] (t) {};
					\path (7,0) node[circle, fill = white, draw=black, inner sep= .7mm] (u) {};
					\path (7,1) node[circle, fill = white, draw=black, inner sep= .7mm] (v) {};
					\path (8,0) node[circle, fill = white, draw=black, inner sep= .7mm] (w) {};
					\path (8,1) node[circle, fill = white, draw=black, inner sep= .7mm] (x) {};
					
					\draw[thick] (a) to (b);
					\draw[thick] (b) to (c);
					\draw[thick] (c) to (d);
					\draw[thick] (d) to (e);
					\draw[thick] (e) to (f);
					\draw[thick] (f) to (h);
					\draw[thick] (h) to (g);
					\draw[thick] (g) to (i);
					\draw[thick] (i) to (k);
					\draw[thick] (k) to (l);
					\draw[thick] (l) to (j);
					\draw[thick] (j) to (m);
					\draw[thick] (m) to (n);
					\draw[thick] (o) to (p);
					\draw[thick] (p) to (n);
					\draw[thick] (o) to (t);
					\draw[thick] (t) to (r);
					\draw[thick] (r) to (q);
					\draw[thick] (q) to (s);
					\draw[thick] (s) to (x);
					\draw[thick] (x) to (v);
					\draw[thick] (v) to (u);
					\draw[thick] (u) to (w);
					
					\path (-.5,-.5) node[circle, fill = none, draw=none, inner sep= .7mm] (A) {\footnotesize a};
					\path (8.5,-.5) node[circle, fill = none, draw=none, inner sep= .7mm] (A) {\footnotesize a'};
					\path (1.5,-.5) node[circle, fill = none, draw=none, inner sep= .7mm] (A) {\footnotesize b};
					\path (6.5,-.5) node[circle, fill = none, draw=none, inner sep= .7mm] (A) {\footnotesize b'};
					\path (-.5,7.5) node[circle, fill = none, draw=none, inner sep= .7mm] (A) {\footnotesize c};
					\path (8.5,7.5) node[circle, fill = none, draw=none, inner sep= .7mm] (A) {\footnotesize c'};
					\path (1.5,7.5) node[circle, fill = none, draw=none, inner sep= .7mm] (A) {\footnotesize d};
					\path (6.5,7.5) node[circle, fill = none, draw=none, inner sep= .7mm] (A) {\footnotesize d'};
				\end{tikzpicture}
				\caption{}
			\end{center}
		\end{subfigure}
		\begin{subfigure}[b]{0.3\textwidth}
			\begin{center}
				\begin{tikzpicture}
					[scale=.3, inner sep=1mm,
					cherry/.style={circle,draw=black,fill=red,thick},
					blueberry/.style={circle,draw=black,fill=blue,thick},
					every loop/.style={min distance=8mm,in=60,out=120,looseness=10}]
					
					\path (4,4) node[rectangle, fill = none, draw=black, inner sep= .7cm] (A) { $H_{i,j}$};
					
				\end{tikzpicture}
				\caption{}
			\end{center}
		\end{subfigure}
	\end{center}
	\caption{Structure $H$. 
		An optimal 1-trajectory in $H$ is shown in $(b)$. We abbreviate an $H$ structure schematically as shown in $(c)$.		
		\label{fig:H}}
\end{figure}

\begin{figure}[]
	\begin{center}
		\begin{subfigure}[b]{0.49\textwidth}
			\begin{center}
				\begin{tikzpicture}
				[scale=.25, inner sep=1mm,
				cherry/.style={circle,draw=black,fill=red,thick},
				blueberry/.style={circle,draw=black,fill=blue,thick},
				every loop/.style={min distance=8mm,in=60,out=120,looseness=10}]
				
				\path (-1,1) node[circle, fill = white, draw=black, inner sep= .7mm] (b) {};
				\path (-1,-1) node[circle, fill = white, draw=black, inner sep= .7mm] (c) {};
				\path (3,1) node[circle, fill = white, draw=black, inner sep= .7mm] (d) {};
				\path (3,-1) node[circle, fill = white, draw=black, inner sep= .7mm] (e) {};
				\path (5,1) node[circle, fill = white, draw=black, inner sep= .7mm] (f) {};
				\path (5,-1) node[circle, fill = white, draw=black, inner sep= .7mm] (g) {};
				\path (7,1) node[circle, fill = white, draw=black, inner sep= .7mm] (h) {};
				\path (7,-1) node[circle, fill = white, draw=black, inner sep= .7mm] (i) {};
				\path (9,1) node[circle, fill = white, draw=black, inner sep= .7mm] (j) {};
				\path (9,3) node[circle, fill = white, draw=black, inner sep= .7mm] (k) {};
				\path (7,3) node[circle, fill = white, draw=black, inner sep= .7mm] (l) {};
				\path (5,3) node[circle, fill = white, draw=black, inner sep= .7mm] (m) {};
				\path (5,-3) node[circle, fill = white, draw=black, inner sep= .7mm] (n) {};
				\path (5,-5) node[circle, fill = white, draw=black, inner sep= .7mm] (o) {};
				\path (7,-3) node[circle, fill = white, draw=black, inner sep= .7mm] (p) {};
				\path (7,-5) node[circle, fill = white, draw=black, inner sep= .7mm] (q) {};
				
				\path (-2,0) node[circle, fill = none, draw=none, inner sep= .7mm] (dots) {...};
				
				\path (23,1) node[circle, fill = white, draw=black, inner sep= .7mm] (b) {};
				\path (23,-1) node[circle, fill = white, draw=black, inner sep= .7mm] (c) {};
				\path (19,1) node[circle, fill = white, draw=black, inner sep= .7mm] (d) {};
				\path (19,-1) node[circle, fill = white, draw=black, inner sep= .7mm] (e) {};
				\path (17,1) node[circle, fill = white, draw=black, inner sep= .7mm] (f) {};
				\path (17,-1) node[circle, fill = white, draw=black, inner sep= .7mm] (g) {};
				\path (15,1) node[circle, fill = white, draw=black, inner sep= .7mm] (h) {};
				\path (15,-1) node[circle, fill = white, draw=black, inner sep= .7mm] (i) {};
				\path (13,1) node[circle, fill = white, draw=black, inner sep= .7mm] (j) {};
				\path (13,3) node[circle, fill = white, draw=black, inner sep= .7mm] (k) {};
				\path (15,3) node[circle, fill = white, draw=black, inner sep= .7mm] (l) {};
				\path (17,3) node[circle, fill = white, draw=black, inner sep= .7mm] (m) {};
				\path (17,-3) node[circle, fill = white, draw=black, inner sep= .7mm] (n) {};
				\path (17,-5) node[circle, fill = white, draw=black, inner sep= .7mm] (o) {};
				\path (15,-3) node[circle, fill = white, draw=black, inner sep= .7mm] (p) {};
				\path (15,-5) node[circle, fill = white, draw=black, inner sep= .7mm] (q) {};
				
				\path (24,0) node[circle, fill = none, draw=none, inner sep= .7mm] (dots) {...};
				
				\draw[dashed] (2, -6)  to (2, 4) to (20, 4) to (20,-6) to (2, -6);
				\draw[<->] (9,0)  to (13, 0);
				\path (11, -.8) node[fill = white, draw=none, inner sep=.5mm] (dots) {\footnotesize $2$};
				\end{tikzpicture}
				\caption{}
			\end{center}
		\end{subfigure}
	\end{center}
	
	\begin{subfigure}[b]{0.49\textwidth}
		\begin{center}
			\begin{tikzpicture}
			[scale=.25, inner sep=1mm,
			cherry/.style={circle,draw=black,fill=red,thick},
			blueberry/.style={circle,draw=black,fill=blue,thick},
			every loop/.style={min distance=8mm,in=60,out=120,looseness=10}]
			
			\path (-1,1) node[circle, fill = white, draw=black, inner sep= .7mm] (b) {};
			\path (-1,-1) node[circle, fill = white, draw=black, inner sep= .7mm] (c) {};
			\path (3,1) node[circle, fill = white, draw=black, inner sep= .7mm] (d) {};
			\path (3,-1) node[circle, fill = white, draw=black, inner sep= .7mm] (e) {};
			\path (5,1) node[circle, fill = white, draw=black, inner sep= .7mm] (f) {};
			\path (5,-1) node[circle, fill = white, draw=black, inner sep= .7mm] (g) {};
			\path (7,1) node[circle, fill = white, draw=black, inner sep= .7mm] (h) {};
			\path (7,-1) node[circle, fill = white, draw=black, inner sep= .7mm] (i) {};
			\path (9,1) node[circle, fill = white, draw=black, inner sep= .7mm] (j) {};
			\path (9,3) node[circle, fill = white, draw=black, inner sep= .7mm] (k) {};
			\path (7,3) node[circle, fill = white, draw=black, inner sep= .7mm] (l) {};
			\path (5,3) node[circle, fill = white, draw=black, inner sep= .7mm] (m) {};
			\path (5,-3) node[circle, fill = white, draw=black, inner sep= .7mm] (n) {};
			\path (5,-5) node[circle, fill = white, draw=black, inner sep= .7mm] (o) {};
			\path (7,-3) node[circle, fill = white, draw=black, inner sep= .7mm] (p) {};
			\path (7,-5) node[circle, fill = white, draw=black, inner sep= .7mm] (q) {};
			
			\path (-2,0) node[circle, fill = none, draw=none, inner sep= .7mm] (dots) {...};
			
			\draw[] (-2,-1) to (c)  to (b) to (d) to (e) to (g) to (n) to (o) to (q) to (p) to (i) to (h) to (f) to (m) to (l) to (k) to (j);

			\path (13,1) node[circle, fill = white, draw=black, inner sep= .7mm] (j') {};
			\draw[] (j) to (j'); 
			
			\path (23,1) node[circle, fill = white, draw=black, inner sep= .7mm] (b) {};
			\path (23,-1) node[circle, fill = white, draw=black, inner sep= .7mm] (c) {};
			\path (19,1) node[circle, fill = white, draw=black, inner sep= .7mm] (d) {};
			\path (19,-1) node[circle, fill = white, draw=black, inner sep= .7mm] (e) {};
			\path (17,1) node[circle, fill = white, draw=black, inner sep= .7mm] (f) {};
			\path (17,-1) node[circle, fill = white, draw=black, inner sep= .7mm] (g) {};
			\path (15,1) node[circle, fill = white, draw=black, inner sep= .7mm] (h) {};
			\path (15,-1) node[circle, fill = white, draw=black, inner sep= .7mm] (i) {};
			\path (13,1) node[circle, fill = white, draw=black, inner sep= .7mm] (j) {};
			\path (13,3) node[circle, fill = white, draw=black, inner sep= .7mm] (k) {};
			\path (15,3) node[circle, fill = white, draw=black, inner sep= .7mm] (l) {};
			\path (17,3) node[circle, fill = white, draw=black, inner sep= .7mm] (m) {};
			\path (17,-3) node[circle, fill = white, draw=black, inner sep= .7mm] (n) {};
			\path (17,-5) node[circle, fill = white, draw=black, inner sep= .7mm] (o) {};
			\path (15,-3) node[circle, fill = white, draw=black, inner sep= .7mm] (p) {};
			\path (15,-5) node[circle, fill = white, draw=black, inner sep= .7mm] (q) {};
			
			\draw[] (24,-1) to (c)  to (b) to (d) to (e) to (g) to (n) to (o) to (q) to (p) to (i) to (h) to (f) to (m) to (l) to (k) to (j);
			
			\path (24,0) node[circle, fill = none, draw=none, inner sep= .7mm] (dots) {...};
			
			\draw[dashed] (2, -6)  to (2, 4) to (20, 4) to (20,-6) to (2, -6);
			\end{tikzpicture}
			\caption{}
			\vspace{10pt}
		\end{center}
	\end{subfigure}
	\begin{subfigure}[b]{0.49\textwidth}
		\begin{center}
			\begin{tikzpicture}
			[scale=.25, inner sep=1mm,
			cherry/.style={circle,draw=black,fill=red,thick},
			blueberry/.style={circle,draw=black,fill=blue,thick},
			every loop/.style={min distance=8mm,in=60,out=120,looseness=10}]
			
			\path (-1,1) node[circle, fill = white, draw=black, inner sep= .7mm] (b) {};
			\path (-1,-1) node[circle, fill = white, draw=black, inner sep= .7mm] (c) {};
			\path (3,1) node[circle, fill = white, draw=black, inner sep= .7mm] (d) {};
			\path (3,-1) node[circle, fill = white, draw=black, inner sep= .7mm] (e) {};
			\path (5,1) node[circle, fill = white, draw=black, inner sep= .7mm] (f) {};
			\path (5,-1) node[circle, fill = white, draw=black, inner sep= .7mm] (g) {};
			\path (7,1) node[circle, fill = white, draw=black, inner sep= .7mm] (h) {};
			\path (7,-1) node[circle, fill = white, draw=black, inner sep= .7mm] (i) {};
			\path (9,1) node[circle, fill = white, draw=black, inner sep= .7mm] (j) {};
			\path (9,3) node[circle, fill = white, draw=black, inner sep= .7mm] (k) {};
			\path (7,3) node[circle, fill = white, draw=black, inner sep= .7mm] (l) {};
			\path (5,3) node[circle, fill = white, draw=black, inner sep= .7mm] (m) {};
			\path (5,-3) node[circle, fill = white, draw=black, inner sep= .7mm] (n) {};
			\path (5,-5) node[circle, fill = white, draw=black, inner sep= .7mm] (o) {};
			\path (7,-3) node[circle, fill = white, draw=black, inner sep= .7mm] (p) {};
			\path (7,-5) node[circle, fill = white, draw=black, inner sep= .7mm] (q) {};
			
			\path (-2,0) node[circle, fill = none, draw=none, inner sep= .7mm] (dots) {...};
			
			\draw[] (-2,1) to (b)  to (c) to (e) to (d) to (f) to (g) to (n) to (o) to (q) to (p) to (i) to (h) to (j) to (k) to (k) to (l) to (m) to (5, 5);
			\path (5,6) node[circle, fill = none, draw=none, inner sep= .7mm, rotate=90] (dots) {...};
			
			\path (23,1) node[circle, fill = white, draw=black, inner sep= .7mm] (b) {};
			\path (23,-1) node[circle, fill = white, draw=black, inner sep= .7mm] (c) {};
			\path (19,1) node[circle, fill = white, draw=black, inner sep= .7mm] (d) {};
			\path (19,-1) node[circle, fill = white, draw=black, inner sep= .7mm] (e) {};
			\path (17,1) node[circle, fill = white, draw=black, inner sep= .7mm] (f) {};
			\path (17,-1) node[circle, fill = white, draw=black, inner sep= .7mm] (g) {};
			\path (15,1) node[circle, fill = white, draw=black, inner sep= .7mm] (h) {};
			\path (15,-1) node[circle, fill = white, draw=black, inner sep= .7mm] (i) {};
			\path (13,1) node[circle, fill = white, draw=black, inner sep= .7mm] (j) {};
			\path (13,3) node[circle, fill = white, draw=black, inner sep= .7mm] (k) {};
			\path (15,3) node[circle, fill = white, draw=black, inner sep= .7mm] (l) {};
			\path (17,3) node[circle, fill = white, draw=black, inner sep= .7mm] (m) {};
			\path (17,-3) node[circle, fill = white, draw=black, inner sep= .7mm] (n) {};
			\path (17,-5) node[circle, fill = white, draw=black, inner sep= .7mm] (o) {};
			\path (15,-3) node[circle, fill = white, draw=black, inner sep= .7mm] (p) {};
			\path (15,-5) node[circle, fill = white, draw=black, inner sep= .7mm] (q) {};
			
			\draw[] (24,1) to (b)  to (c) to (e) to (d) to (f) to (g) to (n) to (o) to (q) to (p) to (i) to (h) to (j) to (k) to (k) to (l) to (m) to (17, 5);
			\path (17,6) node[circle, fill = none, draw=none, inner sep= .7mm, rotate=90] (dots) {...};

			\path (24,0) node[circle, fill = none, draw=none, inner sep= .7mm] (dots) {...};

			\draw[dashed] (2, -6)  to (2, 4) to (20, 4) to (20,-6) to (2, -6);
			\end{tikzpicture}
			\caption{}
		\end{center}
		\vspace{2pt}
	\end{subfigure}
	\caption{Structure $A$ with parts of a 2-chain on the sides (see $(a)$). Visiting structure $A$ in mode 1 makes it suboptimal to visit an $H$ structure above or below (see $(b)$). Visiting structure $A$ in mode 2 however, makes it less costly to visit an $H$ structure above or below (see for example $(c)$ for the former).
		\label{fig:A}}
	\begin{center}
		\begin{subfigure}[b]{0.49\textwidth}
			\begin{center}
				\begin{tikzpicture}
					[scale=.25, inner sep=1mm,
					cherry/.style={circle,draw=black,fill=red,thick},
					blueberry/.style={circle,draw=black,fill=blue,thick},
					every loop/.style={min distance=8mm,in=60,out=120,looseness=10}]
					
					\path (-1,1) node[circle, fill = white, draw=black, inner sep= .7mm] (b) {};
					\path (-1,-1) node[circle, fill = white, draw=black, inner sep= .7mm] (c) {};
					\path (3,1) node[circle, fill = white, draw=black, inner sep= .7mm] (d) {};
					\path (3,-1) node[circle, fill = white, draw=black, inner sep= .7mm] (e) {};
					\path (5,1) node[circle, fill = white, draw=black, inner sep= .7mm] (f) {};
					\path (5,-1) node[circle, fill = white, draw=black, inner sep= .7mm] (g) {};
					\path (3,3) node[circle, fill = white, draw=black, inner sep= .7mm] (h) {};
					\path (3,-3) node[circle, fill = white, draw=black, inner sep= .7mm] (i) {};
					\path (3,-5) node[circle, fill = white, draw=black, inner sep= .7mm] (j) {};
					\path (5,3) node[circle, fill = white, draw=black, inner sep= .7mm] (m) {};
					\path (5,-3) node[circle, fill = white, draw=black, inner sep= .7mm] (n) {};
					\path (5,-5) node[circle, fill = white, draw=black, inner sep= .7mm] (o) {};
					
					\path (-2,0) node[circle, fill = none, draw=none, inner sep= .7mm] (dots) {...};
					
					\path (23,1) node[circle, fill = white, draw=black, inner sep= .7mm] (b) {};
					\path (23,-1) node[circle, fill = white, draw=black, inner sep= .7mm] (c) {};
					\path (19,1) node[circle, fill = white, draw=black, inner sep= .7mm] (d) {};
					\path (19,-1) node[circle, fill = white, draw=black, inner sep= .7mm] (e) {};
					\path (17,1) node[circle, fill = white, draw=black, inner sep= .7mm] (f) {};
					\path (17,-1) node[circle, fill = white, draw=black, inner sep= .7mm] (g) {};
					\path (19,3) node[circle, fill = white, draw=black, inner sep= .7mm] (h) {};
					\path (19,-3) node[circle, fill = white, draw=black, inner sep= .7mm] (i) {};
					\path (19,-5) node[circle, fill = white, draw=black, inner sep= .7mm] (j) {};
					\path (17,3) node[circle, fill = white, draw=black, inner sep= .7mm] (m) {};
					\path (17,-3) node[circle, fill = white, draw=black, inner sep= .7mm] (n) {};
					\path (17,-5) node[circle, fill = white, draw=black, inner sep= .7mm] (o) {};
					
					\path (24,0) node[circle, fill = none, draw=none, inner sep= .7mm] (dots) {...};
					
					\draw[dashed] (2, -6)  to (2, 4) to (20, 4) to (20,-6) to (2, -6);
					\draw[<->] (5.7,1)  to (16.3, 1);
					\path (11, .1) node[fill = white, draw=none, inner sep=.5mm] (dots) {\footnotesize $6$};
				\end{tikzpicture}
				\caption{}
			\end{center}
		\end{subfigure}
	\end{center}
	
	\begin{subfigure}[b]{0.49\textwidth}
		\begin{center}
			\begin{tikzpicture}
				[scale=.25, inner sep=1mm,
				cherry/.style={circle,draw=black,fill=red,thick},
				blueberry/.style={circle,draw=black,fill=blue,thick},
				every loop/.style={min distance=8mm,in=60,out=120,looseness=10}]
				
				\path (-1,1) node[circle, fill = white, draw=black, inner sep= .7mm] (b) {};
				\path (-1,-1) node[circle, fill = white, draw=black, inner sep= .7mm] (c) {};
				\path (3,1) node[circle, fill = white, draw=black, inner sep= .7mm] (d) {};
				\path (3,-1) node[circle, fill = white, draw=black, inner sep= .7mm] (e) {};
				\path (5,1) node[circle, fill = white, draw=black, inner sep= .7mm] (f) {};
				\path (5,-1) node[circle, fill = white, draw=black, inner sep= .7mm] (g) {};
				\path (3,3) node[circle, fill = white, draw=black, inner sep= .7mm] (h) {};
				\path (3,-3) node[circle, fill = white, draw=black, inner sep= .7mm] (i) {};
				\path (3,-5) node[circle, fill = white, draw=black, inner sep= .7mm] (j) {};
				\path (5,3) node[circle, fill = white, draw=black, inner sep= .7mm] (m) {};
				\path (5,-3) node[circle, fill = white, draw=black, inner sep= .7mm] (n) {};
				\path (5,-5) node[circle, fill = white, draw=black, inner sep= .7mm] (o) {};
				
				\path (-2,0) node[circle, fill = none, draw=none, inner sep= .7mm] (dots) {...};
				
				\draw[] (-2, -1) to (c)  to (b) to (d) to (e) to (i) to (j) to (o) to (n) to (g) to (f) to (m) to (h) to (3, 5);
				
				\path (3,6) node[circle, fill = none, draw=none, inner sep= .7mm, rotate=90] (dots) {...};
				
				\path (23,1) node[circle, fill = white, draw=black, inner sep= .7mm] (b) {};
				\path (23,-1) node[circle, fill = white, draw=black, inner sep= .7mm] (c) {};
				\path (19,1) node[circle, fill = white, draw=black, inner sep= .7mm] (d) {};
				\path (19,-1) node[circle, fill = white, draw=black, inner sep= .7mm] (e) {};
				\path (17,1) node[circle, fill = white, draw=black, inner sep= .7mm] (f) {};
				\path (17,-1) node[circle, fill = white, draw=black, inner sep= .7mm] (g) {};
				\path (19,3) node[circle, fill = white, draw=black, inner sep= .7mm] (h) {};
				\path (19,-3) node[circle, fill = white, draw=black, inner sep= .7mm] (i) {};
				\path (19,-5) node[circle, fill = white, draw=black, inner sep= .7mm] (j) {};
				\path (17,3) node[circle, fill = white, draw=black, inner sep= .7mm] (m) {};
				\path (17,-3) node[circle, fill = white, draw=black, inner sep= .7mm] (n) {};
				\path (17,-5) node[circle, fill = white, draw=black, inner sep= .7mm] (o) {};
				
				\path (24,0) node[circle, fill = none, draw=none, inner sep= .7mm] (dots) {...};
				
				\draw[] (24, -1) to (c)  to (b) to (d) to (e) to (i) to (j) to (o) to (n) to (g) to (f) to (m) to (h) to (19, 5);
				
				\path (19,6) node[circle, fill = none, draw=none, inner sep= .7mm, rotate=90] (dots) {...};
				
				\draw[dashed] (2, -6)  to (2, 4) to (20, 4) to (20,-6) to (2, -6);
				
			\end{tikzpicture}
			\caption{}
		\end{center}
	\end{subfigure}
	\begin{subfigure}[b]{0.49\textwidth}
		\begin{center}
			\begin{tikzpicture}
				[scale=.25, inner sep=1mm,
				cherry/.style={circle,draw=black,fill=red,thick},
				blueberry/.style={circle,draw=black,fill=blue,thick},
				every loop/.style={min distance=8mm,in=60,out=120,looseness=10}]
				
				\path (-1,1) node[circle, fill = white, draw=black, inner sep= .7mm] (b) {};
				\path (-1,-1) node[circle, fill = white, draw=black, inner sep= .7mm] (c) {};
				\path (3,1) node[circle, fill = white, draw=black, inner sep= .7mm] (d) {};
				\path (3,-1) node[circle, fill = white, draw=black, inner sep= .7mm] (e) {};
				\path (5,1) node[circle, fill = white, draw=black, inner sep= .7mm] (f) {};
				\path (5,-1) node[circle, fill = white, draw=black, inner sep= .7mm] (g) {};
				\path (3,3) node[circle, fill = white, draw=black, inner sep= .7mm] (h) {};
				\path (3,-3) node[circle, fill = white, draw=black, inner sep= .7mm] (i) {};
				\path (3,-5) node[circle, fill = white, draw=black, inner sep= .7mm] (j) {};
				\path (5,3) node[circle, fill = white, draw=black, inner sep= .7mm] (m) {};
				\path (5,-3) node[circle, fill = white, draw=black, inner sep= .7mm] (n) {};
				\path (5,-5) node[circle, fill = white, draw=black, inner sep= .7mm] (o) {};
				
				\path (-2,0) node[circle, fill = none, draw=none, inner sep= .7mm] (dots) {...};
				
				\draw[] (-2, 1) to (b)  to (c) to (e) to (i) to (j) to (o) to (n) to (g) to (f) to (d) to (h) to (m) to (5, 5);
				
				\path (5,6) node[circle, fill = none, draw=none, inner sep= .7mm, rotate=90] (dots) {...};
				
				\path (23,1) node[circle, fill = white, draw=black, inner sep= .7mm] (b) {};
				\path (23,-1) node[circle, fill = white, draw=black, inner sep= .7mm] (c) {};
				\path (19,1) node[circle, fill = white, draw=black, inner sep= .7mm] (d) {};
				\path (19,-1) node[circle, fill = white, draw=black, inner sep= .7mm] (e) {};
				\path (17,1) node[circle, fill = white, draw=black, inner sep= .7mm] (f) {};
				\path (17,-1) node[circle, fill = white, draw=black, inner sep= .7mm] (g) {};
				\path (19,3) node[circle, fill = white, draw=black, inner sep= .7mm] (h) {};
				\path (19,-3) node[circle, fill = white, draw=black, inner sep= .7mm] (i) {};
				\path (19,-5) node[circle, fill = white, draw=black, inner sep= .7mm] (j) {};
				\path (17,3) node[circle, fill = white, draw=black, inner sep= .7mm] (m) {};
				\path (17,-3) node[circle, fill = white, draw=black, inner sep= .7mm] (n) {};
				\path (17,-5) node[circle, fill = white, draw=black, inner sep= .7mm] (o) {};
				
				\path (24,0) node[circle, fill = none, draw=none, inner sep= .7mm] (dots) {...};
				
				\draw[] (24, 1) to (b)  to (c) to (e) to (i) to (j) to (o) to (n) to (g) to (f) to (d) to (h) to (m) to (17, 5);
				
				\path (17,6) node[circle, fill = none, draw=none, inner sep= .7mm, rotate=90] (dots) {...};
				
				\draw[dashed] (2, -6)  to (2, 4) to (20, 4) to (20,-6) to (2, -6);
				
			\end{tikzpicture}
			\caption{}
		\end{center}
	\end{subfigure}
	\caption{Structure $B$ (see $(a)$). Visiting structure $B$ in any mode allows it to optimally visit an $H$ structure above or below (see $(b)$ and $(c)$). 
		\label{fig:B}}
\end{figure}

\begin{figure}[]
		\begin{center}
			\begin{tikzpicture}[scale=0.45]
				\path (-.5, 0.5) node[circle, fill = white, draw=black, inner sep=.5mm] (a) {};
				\path (3.5, 0.5) node[circle, fill = white, draw=black, inner sep=.5mm] (b) {};
				\draw (a) to (b);
				\path (-.5, -14.5) node[circle, fill = white, draw=black, inner sep=.5mm] (a) {};
				\path (3.5, -14.5) node[circle, fill = white, draw=black, inner sep=.5mm] (b) {};
				\draw (a) to (b);
				\path (-1.5, .5) node[circle, fill = white, draw=none, inner sep=.5mm] (a) {$X$};
				\path (-1.5, -14.5) node[circle, fill = white, draw=none, inner sep=.5mm] (a) {$Y$};
				
				\path (20, 0) node[circle, fill = white, draw=none, inner sep=0mm] (right) {};
				\path (20, 1) node[circle, fill = white, draw=none, inner sep=0mm] (bisright) {};
				\path (20, -4) node[circle, fill = white, draw=none, inner sep=0mm] (right2) {};
				\path (20, -3) node[circle, fill = white, draw=none, inner sep=0mm] (bisright2) {};
				\path (20, -8) node[circle, fill = white, draw=none, inner sep=0mm] (right3) {};
				\path (20, -7) node[circle, fill = white, draw=none, inner sep=0mm] (bisright3) {};
				\path (20, -9) node[circle, fill = white, draw=none, inner sep=0mm] (right4) {};
				\path (20, -8) node[circle, fill = white, draw=none, inner sep=0mm] (bisright4) {};
				\path (20, -12) node[circle, fill = white, draw=none, inner sep=0mm] (right5) {};
				\path (20, -11) node[circle, fill = white, draw=none, inner sep=0mm] (bisright5) {};
				\path (20, -15) node[circle, fill = white, draw=none, inner sep=0mm] (right6) {};
				\path (20, -14) node[circle, fill = white, draw=none, inner sep=0mm] (bisright6) {};
				
				\path (4, 0) node[circle, fill = white, draw=none, inner sep=0mm] (left) {};
				\path (4, 1) node[circle, fill = white, draw=none, inner sep=0mm] (bisleft) {};
				\path (4, -4) node[circle, fill = none, draw=none, inner sep=0mm] (left2) {};
				\path (4, -3) node[circle, fill = none, draw=none, inner sep=0mm] (bisleft2) {};
				\path (4, -8) node[circle, fill = white, draw=none, inner sep=0mm] (left3) {};
				\path (4, -7) node[circle, fill = white, draw=none, inner sep=0mm] (bisleft3) {};
				\path (4, -9) node[circle, fill = white, draw=none, inner sep=0mm] (left4) {};
				\path (4, -8) node[circle, fill = white, draw=none, inner sep=0mm] (bisleft4) {};
				\path (4, -12) node[circle, fill = white, draw=none, inner sep=.5mm] (left5) {};
				\path (4, -11) node[circle, fill = white, draw=none, inner sep=.5mm] (bisleft5) {};
				\path (4, -15) node[circle, fill = white, draw=none, inner sep=0mm] (left6) {};
				\path (4, -14) node[circle, fill = white, draw=none, inner sep=0mm] (bisleft6) {};
				
				\path (5, -11) node[circle, fill = white, draw=none, inner sep=.5mm, rotate=90] (dots) {\footnotesize ...};
				\path (10, -11) node[circle, fill = white, draw=none, inner sep=.5mm, rotate=90] (dots) {\footnotesize ...};
				\path (19, -11) node[circle, fill = white, draw=none, inner sep=.5mm, rotate=90] (dots) {\footnotesize ...};
				
				\path (12, .5) node[circle, fill = white, draw=none, inner sep=.5mm] (dots) {\footnotesize $C_1$};
				\path (12, -3.5) node[circle, fill = white, draw=none, inner sep=.5mm] (dots) {\footnotesize $C_2$};
				\path (12, -7.5) node[circle, fill = white, draw=none, inner sep=.5mm] (dots) {\footnotesize $C_3$};
				\path (12, -14.5) node[circle, fill = white, draw=none, inner sep=.5mm] (dots) {\footnotesize $C_n$};

				//drawing lines
				\draw[] (left)  to (right);
				\draw[] (bisleft)  to (bisright);
				\draw[] (left2)  to (right2);
				\draw[] (bisleft2)  to (bisright2);
				\draw[] (left3)  to (right3);
				\draw[] (bisleft3)  to (bisright3);
				\draw[] (right6)  to (left6);
				\draw[] (bisleft6)  to (bisright6);
				
				//drawing 1chains connecting Ci
				
				\path (20.5, .5) node[circle, fill = white, draw=black, inner sep=.5mm, rotate=90] (a) {};
				\path (20.5, -3.5) node[circle, fill = white, draw=black, inner sep=.5mm, rotate=90] (b) {};
				\draw[] (a)  to (22.5, .5) to (22.5, -3.5) to (b);
				
				\path (3.5, -3.5) node[circle, fill = white, draw=black, inner sep=.5mm, rotate=90] (a) {};
				\path (3.5, -7.5) node[circle, fill = white, draw=black, inner sep=.5mm, rotate=90] (b) {};
				\draw[] (a)  to (1.5, -3.5) to (1.5, -7.5) to (b);
				
				\path (22.5, -11) node[circle, fill = white, draw=none, inner sep=.5mm, rotate=90] (dots) {\footnotesize ...};
				\path (20.5, -7.5) node[circle, fill = white, draw=black, inner sep=.5mm, rotate=90] (a) {};
				\path (20.5, -14.5) node[circle, fill = white, draw=black, inner sep=.5mm, rotate=90] (b) {};
				\draw[] (a)  to (22.5, -7.5) to (22.5, -10);
				\draw[] (22.5,-12)  to (22.5, -14.5) to (b);
				
				//arrow with distances a and b
				
				\draw[<->] (-.5,1)  to (3.5, 1);
				\path (1.5, 1.5) node[fill = white, draw=none, inner sep=.5mm] (dots) {\footnotesize $1000$};
				\draw[<->] (1.5, -3)  to (3.5, -3);
				\path (2.5, -2.5) node[fill = white, draw=none, inner sep=.5mm] (dots) {\footnotesize $100$};
				\draw[<->] (6.5, -1.5)  to (8.5, -1.5);
				\path (7.5, -1) node[fill = white, draw=none, inner sep=.5mm] (dots) {\footnotesize $100$};
				\draw[<->] (5, -1)  to (5, -0.25);
				\path (5.8, -.6) node[fill = white, draw=none, inner sep=.5mm] (dots) {\footnotesize $22$};
				\draw[<->] (5, -2)  to (5, -2.75);
				\path (5.8, -2.5) node[fill = white, draw=none, inner sep=.5mm] (dots) {\footnotesize $21$};
				
				// H structures
				
				\path (4.9, -1.5) node[rectangle, fill = white, draw=black, inner sep=.5mm] (dots) {\footnotesize $H_{1,1}$};
				\path (10, -1.5) node[rectangle, fill = white, draw=black, inner sep=.5mm] (dots) {\footnotesize $H_{1,2}$};
				\path (19, -1.5) node[rectangle, fill = white, draw=black, inner sep=.5mm] (dots) {\footnotesize $H_{1,m}$};
				\path (14, -1.5) node[rectangle, fill = white, draw=none, inner sep=.5mm] (dots) {\footnotesize $...$};
				
				\path (4.9, -5.5) node[rectangle, fill = white, draw=black, inner sep=.5mm] (dots) {\footnotesize $H_{2,1}$};
				\path (10, -5.5) node[rectangle, fill = white, draw=black, inner sep=.5mm] (dots) {\footnotesize $H_{2,2}$};
				\path (19, -5.5) node[rectangle, fill = white, draw=black, inner sep=.5mm] (dots) {\footnotesize $H_{2,m}$};
				\path (14, -5.5) node[rectangle, fill = white, draw=none, inner sep=.5mm] (dots) {\footnotesize $...$};
				
				\path (4.9, -9.5) node[rectangle, fill = white, draw=black, inner sep=.5mm] (dots) {\footnotesize $H_{3,1}$};
				\path (10, -9.5) node[rectangle, fill = white, draw=black, inner sep=.5mm] (dots) {\footnotesize $H_{3,2}$};
				\path (19, -9.5) node[rectangle, fill = white, draw=black, inner sep=.5mm] (dots) {\footnotesize $H_{3,m}$};
				\path (14, -9.5) node[rectangle, fill = white, draw=none, inner sep=.5mm] (dots) {\footnotesize $...$};
				
				\path (5.5, -12.5) node[rectangle, fill = white, draw=black, inner sep=.5mm] (dots) {\footnotesize $H_{n-1,1}$};
				\path (10.5, -12.5) node[rectangle, fill = white, draw=black, inner sep=.5mm] (dots) {\footnotesize $H_{n-1,2}$};
				\path (18.2, -12.5) node[rectangle, fill = white, draw=black, inner sep=.5mm] (dots) {\footnotesize $H_{n-1,m}$};
				\path (14, -12.5) node[rectangle, fill = white, draw=none, inner sep=.5mm] (dots) {\footnotesize $...$};
			\end{tikzpicture}
		\end{center}
		\caption{\ConstructionOne from \ExactCover to \ContinuousVectorTSPWithStops (before adding structures $A$ and $B$ and rotating the city set).}
		\label{fig:VTSP_construction}
\end{figure}

Construct the structure shown in Figure~\ref{fig:VTSP_construction}, where $n$ is the number of subsets given in the corresponding \ExactCover instance, and $m$ the number of elements in the universe.

Then, for every 2-chain $C_i$, remove the cities positioned directly above or below an $H$. In other words, if the $H$ structure is contained in a bounding box of coordinates $(x_1, y_1)$ to $(x_2, y_2)$, the cities of $C_i$ between $x$-coordinates $x_1$ and $x_2$ are concerned. Insert in this emptied space one of two structures, depending on the elements in $F_i$, $C_i$'s corresponding subset. If the subset contains the element corresponding to the above (or below) $H$, then replace by structure $A$ (see Figure~\ref{fig:A}), otherwise replace by structure $B$ (see Figure~\ref{fig:B}). 
Finally, rotate the whole city set by 45°.
As the city coordinates should be integer, after the rotation we scale everything by $\sqrt{2}$. (The rotation and the scaling are the only changes we add \textit{w.r.t.} Papadimitriou's original construction.)
This concludes \ConstructionOne.\\

The following definitions are from Papadimitriou~\cite{papadimitriou1977euclidean}. A subset $Q$ of the set of cities is an \textit{$a$-component} if for all $q \in Q$ we have $\texttt{min}(\texttt{cost}(q, p) : p \not\in Q) \geq a$ and $\texttt{max}(\texttt{cost}(q, p) : p \in Q) < a$.
A \textit{$k$-path} for a set of cities is a set of $k$, not necessarily closed paths visiting all cities. A solution for a \ContinuousVectorTSPWithStops instance is thus a closed (or cyclic) 1-path.
A subset of cities is \textit{$a$-compact} if, for all positive integers $k$, an optimal $k$-path has cost less than the cost of an optimal $(k+1)$-path plus $a$. Note that $a$-components are trivially $a$-compact.

\begin{lemma}[Papadimitriou~\cite{papadimitriou1977euclidean}]
	Suppose we have $N$ $a$-components $P_1, ..., P_N \subseteq P$, such that the cost to connect any two components is at least $2a$, and $P_0 = P \setminus \bigcup\limits_{i=1}^{N} P_i$
	is $a$-compact. Suppose that any optimal 1-path of this instance does not directly connect any two $a$-components (but rather indirectly connects them through $P_0$). Let $K_1, ..., K_N$ be the costs of the optimal 1-paths of $P_1, ..., P_N$ and $K_0$ the cost of the optimal $(N+ 1)$-path of $P_0$. If there is a 1-path $S$ of $P$ consisting of the union of an optimal $(N+1)$-path of $P_0$, $N$ optimal 1-paths of $P_1, ..., P_N$ and $2N$ paths of cost $a$ connecting $a$-components to $P_0$, then $S$ is optimal. If no such 1-path exists, the optimal 1-path of $P$ has a cost greater than $K = K_0 + K_1 + ... + K_N + 2Na$.
	\label{lem:papa_lemma}
\end{lemma}

Papadimitriou proposed the following optimal $k$-paths in his construction: 
\begin{itemize}
	\item For 1-chains, the 1-path shown in \Cref{fig:1-chain};
	\item For 2-chains, the path zigzagging one way (mode 1) or the other (mode 2) as shown in \Cref{fig:2-chain};
	\item For $A$ structures, a 1-path for mode 1, and a 2-path for mode 2, such as those shown in \Cref{fig:A}. Note that no 2-path exists for mode 1 such that the endpoints would align with vertices of a structure $H$ directly above or below (as they do for the 2-path in mode 2); 
	\item For $B$ structures, independent of the mode, a 2-path aligning with vertices of a structure $H$ above or below such as shown in \Cref{fig:B};
	\item For $H$ structures, the 1-path shown in \Cref{fig:H}. Among all 1-paths for $H$ having as endpoints two of the cities $a, a', b, b', c, c', d, d'$, there are 4 optimal 1-paths, namely those with endpoints $(a, a')$, $(b, b')$, $(c, c')$, $(d, d')$.
\end{itemize} 
It can be shown that these $k$-paths are optimal for \ContinuousVectorTSPWithStops as well, although due to the nature of \ContinuousVectorTSPWithStops, other paths would have been optimal as well before rotating the city set, for example for 2-chains, the path shown in \Cref{fig:2-chain_undesiredpaths} is optimal, but it completely disobeys any ''mode``. After rotation, only the described $k$-paths are optimal (see again \Cref{fig:2-chain_undesiredpaths}). 

\begin{figure}[h]
	\begin{center}
		\begin{subfigure}[b]{0.49\textwidth}
			\begin{center}
				\begin{tikzpicture}
					[scale=.45, inner sep=1mm,
					cherry/.style={circle,draw=black,fill=red,thick},
					blueberry/.style={circle,draw=black,fill=blue,thick},
					every loop/.style={min distance=8mm,in=60,out=120,looseness=10}]
					
					\path (-.5,.5) node[circle, fill = white, draw=black, inner sep= .7mm] (a) {};
					\path (0,1) node[circle, fill = white, draw=black, inner sep= .7mm] (b) {};
					\path (0,0) node[circle, fill = white, draw=black, inner sep= .7mm] (c) {};
					\path (2,1) node[circle, fill = white, draw=black, inner sep= .7mm] (d) {};
					\path (2,0) node[circle, fill = white, draw=black, inner sep= .7mm] (e) {};
					\path (4,1) node[circle, fill = white, draw=black, inner sep= .7mm] (f) {};
					\path (4,0) node[circle, fill = white, draw=black, inner sep= .7mm] (g) {};
					
					\path (6,0.5) node[circle, fill = none, draw=none, inner sep= .7mm] (A) {...};
					
					\draw (a) to (b);
					\draw (b) to (c);
					\draw (c) to (e);
					\draw (e) to (d);
					\draw (d) to (g);
					\draw (f) to (g);
					\draw (f) to (6,1);
				\end{tikzpicture}
				\caption{}
			\end{center}
		\end{subfigure}
		\begin{subfigure}[b]{0.49\textwidth}
			\begin{center}
				\begin{tikzpicture}
					[rotate=45, scale=.45, inner sep=1mm,
					cherry/.style={circle,draw=black,fill=red,thick},
					blueberry/.style={circle,draw=black,fill=blue,thick},
					every loop/.style={min distance=8mm,in=60,out=120,looseness=10}]
					
					\path (-.5,.5) node[circle, fill = white, draw=black, inner sep= .7mm] (a) {};
					\path (0,1) node[circle, fill = white, draw=black, inner sep= .7mm] (b) {};
					\path (0,0) node[circle, fill = white, draw=black, inner sep= .7mm] (c) {};
					\path (2,1) node[circle, fill = white, draw=black, inner sep= .7mm] (d) {};
					\path (2,0) node[circle, fill = white, draw=black, inner sep= .7mm] (e) {};
					\path (4,1) node[circle, fill = white, draw=black, inner sep= .7mm] (f) {};
					\path (4,0) node[circle, fill = white, draw=black, inner sep= .7mm] (g) {};
					
					\path (6,0.5) node[circle, fill = none, draw=none, inner sep= .7mm] (A) {...};
					
					\draw (a) to (c);
					\draw (b) to (c);
					\draw (b) to (d);
					\draw (e) to (d);
					\draw (e) to (g);
					\draw (f) to (g);
					\draw (f) to (6,1);
				\end{tikzpicture}
				\caption{}
			\end{center}
		\end{subfigure}
	\end{center}
	\caption{Example of an undesired but optimal 1-path for a 2-chain before rotation $(a)$. Papadimitrou's proposed optimal 1-path (in mode 2) after rotation $(b)$. \label{fig:2-chain_undesiredpaths}}
\end{figure}

These optimal $k$-paths in our \ContinuousVectorTSPWithStops construction have a cost of:
\begin{itemize}
	\item $2(n-1)$ for a 1-chain of $n$ cities, or $2\ell$ for a 1-chain of length $\ell$. Let's denote the latter as $c_{1, \ell}$;
	\item $6 + (2 + 2 \sqrt{2})\frac{n-4}{2}$ for a 2-chain of $n$ cities, or $6 + (2 + 2 \sqrt{2})\frac{\ell-1}{2}$ for a 2-chain of length $\ell$. Let's denote the latter as $c_{2, \ell}$;
	\item $52 + 2\sqrt{2}$ for a structure $A$ in mode 1 (\textit{resp.} $52$ in mode 2), or $42 - 6\sqrt{2}$ (\textit{resp.} $42 - 8{\sqrt{2}}$) more than the part of the 2-chain tour it replaces. Let's denote the latter as $c_{A, 1}$ (\textit{resp.} $c_{A, 2}$);
	\item $36$ for a structure $B$ independent of the mode, or $26 - 8{\sqrt{2}}$ more than the part of the 2-chain tour it replaces. Let's denote the latter as $c_B$;
	\item $36 + 8 \sqrt{{2}} + 2\sqrt{6}$ for a structure $H$. Let's denote this cost as $c_H$.
\end{itemize}

We are now ready to prove the main theorem based on the presented construction and lemmas.

\begin{theorem}
	\ContinuousVectorTSPWithStops is NP-hard.
	\label{theorem:CVTSP0_nphard}
\end{theorem}

\begin{proof}
	We reduce from \ExactCover with \ConstructionOne.
	We show that this construction fits the hypothesis of \Cref{lem:papa_lemma}. 
	Observe that connecting a $k$-path from some 2-chain $C_i$ (with structures $A$ and/or $B$) with a $k$-path from some $H_{i,j}$ (or $H_{i-1,j}$) has optimal cost $4\sqrt{5}$. This optimal cost is only attained by connecting some extremity city of structure $H$ ($a, a', b, b', c, c', d, d'$ in \Cref{fig:H}) to the closest city of 2-chain $C$ positioned at distance 20 with a 45° angle. 
	The optimal cost to connect two $k$-paths between two $H_{i,j}$, is more than $4\sqrt{5}$ (it is at least $4\sqrt{11}$ to connect $H_{i,j}$ and $H_{i+1,j}$ and at least $20$ to connect $H_{i,j}$ and $H_{i,j+1}$) vectors.
	It should be clear that an optimal 1-path must have $X$ and $Y$ as endpoints (see again \Cref{fig:VTSP_construction}). 
	This construction meets the hypotheses of Lemma~\ref{lem:papa_lemma} with $a = 4\sqrt{5}$, $N = m(n-1)$, $K_1 = ... = K_N = c_H$ and $K_0 = 2c_{1, 1000} + 2(n-1)c_{1, 100} + (n-1)c_{1, 51} + nc_{2, 8m+100(m-1)} + mc_{A, 1} + (p-m)c_{A,2} + (nm-p)c_B$, where $p$ is the sum of cardinalities of all given subsets of the \ExactCover instance, which in turn describes the ratio of $A$ structures vs $B$ structures, of which the costs are distinct. Note that some distances in the construction (such as 22 and 21) were chosen very carefully (so the cost between $A$ (or $B$) structures and $H$ structures is exactly $a$, as the former are larger by 2 units at the bottom, and 1 unit at the top \textit{w.r.t.} 2-chains).
	We examine when this structure has an optimal 1-path $S$, as described in the lemma. $S$ traverses all 1-chains in the obvious way, and all 2-chains in one of the two modes. Since its portion on $P_0$ has to be optimal, $S$ must visit a structure $H$ from any structure $B$ encountered, and it must return to the symmetric city of $B$, since its portion on $H$ must be optimal, too. If $S$ encounters a structure $A$ and the corresponding chain is traversed in mode 2, $S$ will also visit a structure $H$. However, if the corresponding 2-chain is traversed in mode 1, $S$ will traverse $A$ without visiting any structure $H$, since trajectories connecting $P_0$ and any $H$ structure must be of cost $a$. Moreover this must happen exactly once for each column of the structure, since there are $n-1$ copies of $H$ and $n$ structures $A$ or $B$ in each column. (In particular, this implies that an $H$ structure not visited from the current 2-chain must be visited from the following 2-chain.) Hence, if we consider the fact that $C_j$ is traversed in mode $1$ (\textit{resp.} mode $2$) to mean that the corresponding subset is (\textit{resp.} is not) contained in the \ExactCover solution, we see that the existence of a 1-path $S$, as described in \cref{lem:papa_lemma}, implies the \ExactCover instance admits a solution. 
	Conversely, if the \ExactCover instance admits a solution, we assign, as above, modes to the chains according to whether or not the corresponding subset is included in the solution. It is then possible to exhibit a 1-path $S$ meeting the requirements of \cref{lem:papa_lemma}. Hence the structure at hand has a 1-path of cost no more than $K = K_0 + m(n-1)(2a + c_H)$ if and only if the given instance of \ExactCover is positive.
	Finally, to obtain a valid \ContinuousVectorTSPWithStops tour, connect both endpoints $X$ and $Y$ in \cref{fig:VTSP_construction} with a 1-chain, and increase $K$ accordingly.
\end{proof}

\begin{lemma}
	\label{lemma:CVTSP0_gap}
	The difference in cost between positive \ContinuousVectorTSPWithStops instances and negative ones (of a same size) in \ConstructionOne is at least $\sqrt{42} - 2\sqrt{10} > 0.15$.
\end{lemma}

\begin{proof}
	In Papadimitriou's construction, it is stated that the minimum precision necessary is $\sqrt{401} - 20 > 0.02$, which corresponds to distinguishing between optimally connecting a 1-path from a structure $A$ or $B$ to a 1-path from a structure $H$, or connecting it suboptimally (all other suboptimal connections/paths can be recognized with lower precision). This implies that a gap between positive and negative instances must be at least this large.\\
	In \ConstructionOne, this distinction remains the smallest one (\textit{i.e.} the one requiring the highest precision), and the gap evaluates to $\sqrt{42} - 2\sqrt{10} > 0.15$. 
\end{proof}

We remark that scaling up the construction by some factor $r^2$ multiplies the cost of a solution tour, as well as the gap, by $r$.

\begin{lemma}
	\label{lemma:optimalCVTSP0K_optimalVTSP0Kplusn}
	If an optimal \ContinuousVectorTSPWithStops tour has cost $K$ on an instance of $n$ cities, then an optimal \VectorTSPWS trajectory has a cost of at most $K + n$.
\end{lemma}

\begin{proof}
	The cost between any pair of cities $p_i$, $p_j$ in \VectorTSPWS is $c(p_i, p_j) = \lceil 2\sqrt{K} \rceil$ where $K$ is the largest distance between the cities among all dimensions, and for \ContinuousVectorTSPWithStops this cost is $2\sqrt{K}$. Note that the difference between such costs is at most 1. An optimal tour in \ContinuousVectorTSPWithStops is exactly $n$ intercity costs (in some order), meaning the difference between optimal trajectories is at most $n$.
\end{proof}

\begin{theorem}
	\label{theorem:VTSP0_nphard}
	\VectorTSPWS is NP-hard.
\end{theorem}

\begin{proof}
	Take \ConstructionOne and scale it up by a sufficiently large factor, say $R = (100n)^2$. The gap from \Cref{lemma:CVTSP0_gap} is now at least $13n$. 
	Since \ContinuousVectorTSPWithStops is NP-hard (\Cref{theorem:CVTSP0_nphard}), and since the cost of an optimal \VectorTSPWS trajectory is at most that of an optimal \ContinuousVectorTSPWithStops tour plus $n$ (\Cref{lemma:optimalCVTSP0K_optimalVTSP0Kplusn}), which fits within the gap of size at least $13n$, \VectorTSPWS is also NP-hard.
\end{proof}

Note that a gap of any (constant) size is attainable for \VectorTSPWS, by changing $R$ in the proof of \Cref{theorem:VTSP0_nphard} accordingly. This will be useful for proving that \VectorTSP is NP-hard in \Cref{subsec:NPhard_VTSPinf}.

\subsection{NP-hardness of \VectorTSP}
\label{subsec:NPhard_VTSPinf}

Through a scaling argument and use of the gap from \Cref{subsec:NPhard_VTSP0}, we show that the NP-hardness from \VectorTSPWS (\Cref{theorem:VTSP0_nphard}) transfers to \VectorTSP.

\textbf{\ConstructionTwo} and \textbf{\ConstructionThree:}
Consider \ConstructionOne. Define grid factor $G = 100n^2$, and the scaling factor $S = \frac{100}{0.13}G^2 n^3$. As established in \Cref{subsec:NPhard_VTSP0}, the gap between positive \VectorTSPWS instances and negatives ones can be increased by scaling up the construction sufficiently; scale until the gap is at least $(nG)^2$.
This concludes \ConstructionTwo.
Scale up again by $S^2$ and replace each city with a $G\times G$ square of cities (so containing $(G+1)^2$ cities) with the left top corner city in the original city location.
This concludes \ConstructionThree.

We define each square of size $5G \times 5G$ centred on such a square of cities as a \textbf{region}, the square itself as the \textbf{inner region} and the region without the inner region as the (empty) \textbf{outer region}. (See also \Cref{fig:region_visits}.)
A \textbf{region visit} is a segment of trajectory intersecting the inner region; it is counted from the entry into the corresponding
region until the trajectory  leaves the region (a single region visit can leave and re-enter the inner region).
A region visit is \textbf{fast} if a speed component (in some dimension) is always larger than zero during the entire region visit. Otherwise the region visit is called \textbf{slow}. Note that a slow visit might never come to a complete stop, as speed components may turn to zero at different points in time.

\begin{figure}[h]
	\begin{center}
		\begin{subfigure}[b]{0.49\textwidth}
			\begin{center}
				\includegraphics[width=\textwidth]{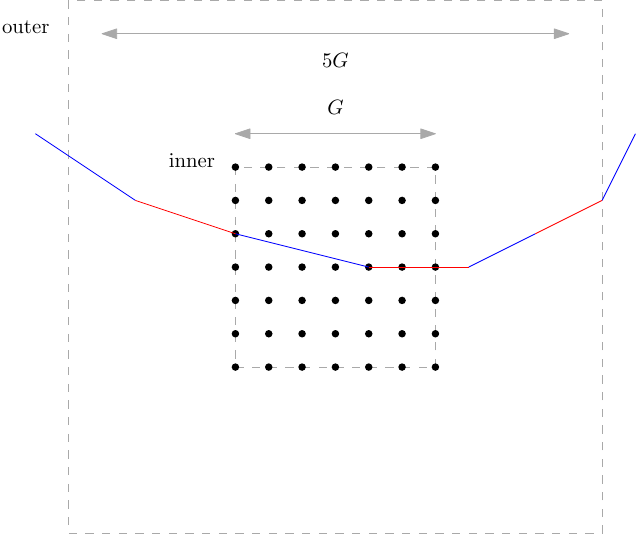}
				\caption{Fast visit: at least one speed component (the one along the $x$ coordinate in this example) remains non-zero.}
			\end{center}
		\end{subfigure}
		\begin{subfigure}[b]{0.49\textwidth}
			\begin{center}
				\includegraphics[width=\textwidth]{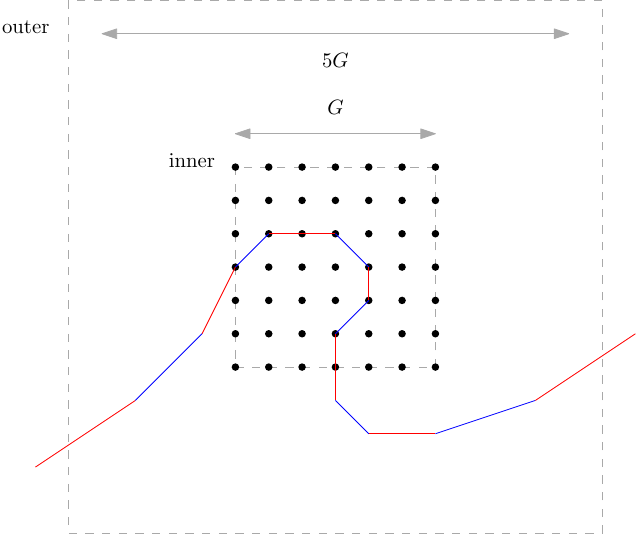}
				\caption{Slow visit: both speed components are zero at some time during the visit (although not at the same time in this example).}
			\end{center}
		\end{subfigure}
	\end{center}
	\caption{Illustrative examples of a region and visits of a region. The larger grey dashed square indicates the outer region, while the other square shows the inner region containing the cities. Note that due to scaling, the outer region is much larger in actuality than depicted. The alternating red and blue lines are successive segments of trajectories.
	\label{fig:region_visits}}
\end{figure}

\begin{lemma}
	Any slow region visit can be locally converted to a region visit that stops within the inner region, then visits all the
	cities, stopping at each one. The conversion doesn’t change the trajectory outside the region visit	and costs at most $2G^2 + 15G$ additional vectors.
\end{lemma}

\begin{proof}
	A slow region visit means that all speed components must be zero at some point in time (although not necessarily at the same time). Take the first time it comes to a stop in some dimension. Keep this dimension to a stop (instead of whatever acceleration it had during the slow visit) until the other dimension comes to a stop (as it does in the original slow visit). Now the vehicle is at a stop somewhere in the region. Using at most $5G$ vectors, the vehicle can travel to the inner region and come at a stop there at a corner city. Visiting the cities then requires at most $2G^2$ vectors. Finally, we describe how to make the vehicle join back up with the original slow region visit at the latest configuration where a speed component is zero, say \textit{w.l.o.g.} configuration $c = (x, y, dx, 0)$. Move the vehicle to $(x-\frac{dx (|dx|+ 1)}{2}, y, 0, 0)$ using at most $5G$ vectors since the position is in the region as in the original slow visit, the vehicle was able to accelerate from zero speed to $dx$ speed at position $(x,y)$ while staying inside the region. From that configuration it can reach $c$ using at most $5G$ vectors, linking it back up to the original slow visit. 
\end{proof}

\begin{lemma}
	A fast region visit can visit at most $G$ cities.
\end{lemma}

\begin{proof}
	\textit{W.l.o.g.} suppose the speed component which is always larger than zero to be $dx$.
	The position coordinate $x$ is thus monotonically increasing or monotonically decreasing during the fast region visit, meaning it can visit at most one city per coordinate on the $x$ axis, of which there are at most $G$ in a region.
\end{proof}

\begin{corollary}
	If a trajectory visits all cities in a region, it must do so either with at least one slow region visit, or with at least $G$ fast region visits.
\end{corollary}

\begin{lemma}
	A segment of a trajectory containing five fast region visits to the same region of cost $K$, can be replaced by a slow region visit of cost at most $K+13G$.
\end{lemma}

\begin{proof}
	If a trajectory leaves the region and then returns, the velocity along at least one dimension has changed sign (passing through the zero in the process). 
	Moreover, if the sign of the velocity along some other dimension has not changed, then this speed must be below $G$, as leaving and returning to the inner region takes more than one unit of time.
	Therefore between each two region visits for each dimension there is a moment with a relatively low speed along
	this dimension (not necessarily the same moment).
	
	For each dimension we independently modify the trajectory as follows. After the low speed moment between
	the first two visits, try to get within the inner region with zero speed (both — along this coordinate) as fast as possible.
	This cannot take longer than until the second low-speed moment plus time $2G$ in case there was no full stop
	in between. 
	We do the same transformation in reversed
	time from the fourth low-speed moment towards the third one.
	Note that without changing the trajectory outside of the segment, these two modifications will replace at most $6G$ vectors into
	the third visit. (This in principle can lead to an overlap.) Now we shift the second replaced segment and
	everything afterwards by $6G$, by simply adding null vectors in-between the first replaced segment and the second one.
	
	Observe that thanks to this idle time in all dimensions, we must obtain a point in time where the modified trajectory is within the inner region with zero speed (in all dimensions), which is thus a slow region visit.
\end{proof}

Note that the new slow region visit may not visit other regions/cities potentially visited by the original segment.

\begin{lemma}
	Given a solution of cost $K$ for \VectorTSP in \ConstructionThree, one can construct a solution of cost at most $K + 4nG^2$ for \VectorTSPWS in \ConstructionThree.
\end{lemma}

\begin{proof}
	Consider all the regions and all the region visits. 
	Let’s call a region processed whenever it has been modified so that the cities are visited with stops.
	All the slow region visits can be processed at the cost of $3G^2$ per region. 
	Afterwards, segments with five fast region visits
	to the same unprocessed region, no overlap to any already chosen segments, and the minimum number of visits to
	other unprocessed regions, are modified to a slow region visit at the cost of $G^2$, which in turn can again be processed at the cost of $3G^2$ extra vectors. 
	Note that if some region remains unprocessed, then we have processed fewer than $n$ regions, and each processed region can spend no more than 4 visits to each of the other regions. Thus each unprocessed region has $G - 4n$ usable fast region visits, split
	into at most $n$ groups by the previously chosen slow region visits and segments of five fast region visits. Since $\frac{G-4n}{n} > 5$, we can always continue to process cities until no unprocessed city remains.
\end{proof}

We are now ready to prove the main result.

\begin{theorem}
	\label{theorem:VTSPinf_nphard}
	\VectorTSP is NP-hard.
\end{theorem}

\begin{proof}
	We reduce from \VectorTSPWS, which we have proven NP-hard in \Cref{theorem:VTSP0_nphard}. More precisely, we've proven it has an NP-hard gap, which in \ConstructionTwo we've widened through a scaling argument to be at least $(nG)^2$. It suffices to show that \ConstructionThree has a gap corresponding to this one.
	Consider a solution for \VectorTSPWS in \ConstructionTwo with cost $K$. It can be scaled to a solution for \VectorTSP in \ConstructionThree with cost at most $KS + n + 3nG^2$.
	Consider a solution for \VectorTSP in \ConstructionThree with cost at most $KS + 4nG^2$. We can modify it so it visits cities at a stop for a total cost of at most $KS + 8nG^2$, by forgetting the points except the original city in the corner. Then, scale it down to a solution of cost at most $K + \frac{8nG^2}{S} + nG^2$. This is a solution for \VectorTSPWS in \ConstructionTwo and observe that $\frac{8nG^2}{S} + nG^2 < (nG)^2$.
	Thus we have constructed a gap for \VectorTSP in \ConstructionThree to which the gap for \VectorTSPWS in \ConstructionTwo reduces.
\end{proof}

\section{Algorithmic aspects and experimental investigation}
\label{sec:algorithms}

In this section, we investigate some algorithmic approaches for tackling \VectorTSP. The proposed framework is based on an interaction between a high-level part that compute the tour (i.e. visit order), and a trajectory algorithm that evaluates the cost of \texttt{racetrack($\pi$)}, i.e., the cost of an optimal trajectory realizing this tour. Our solution to the latter problem is perhaps the main contribution of this section, it relies on a multi-point adaptation of the A* algorithm, which may be of independent interest. Experiments are conducted for quantifying how often a trajectory based on an optimal tour for \EuclideanTSP tend to be \emph{suboptimal} for \VectorTSP. The main goal of these experiment is to legitimate the study of \VectorTSP as an independent problem.

\subsection{Computing the tours (\flipTour)}
\label{sec:high-level-algorithm}

A classical heuristic in the TSP literature is the \texttt{Flip} algorithm~\cite{croes1958method} (also called $2$-opt although it is not an approximation algorithm). In each step, every possible \emph{flip}, \textit{i.e.} inversion of a subtour, of the current tour $\pi$ is evaluated. If the resulting tour $\pi'$ improves upon $\pi$, then it is selected and the algorithm tries to improve $\pi'$ in turn. Eventually, the algorithm finds a tour that is a local optimum with respect to the flip operation.
The high level part of our algorithm, called \flipTour, implements a similar approach for exploring the tours for \VectorTSP.  
The main difference between \texttt{Flip} and \flipTour is that the cost of a tour is not evaluated in terms of distance, but in terms of the cost of an optimal trajectory realizing it (obtained by calling the \multipointastar algorithm presented in \Cref{subsec:A-star}). 

Let \texttt{arbitrary\_tour()} be a function that returns a trivial initial tour of length at most $O(nL)$, starting and terminating at the desired city. Such a tour can be obtained easily by using the strategy described in the proof of \Cref{lem:walk}. Let \texttt{racetrack()} be the function that returns the cost of an optimal trajectory for the given tour, and let \texttt{flip()} be the function that inverts the desired segment of the underlying tour. Then \flipTour is described by \cref{algo:2-opt}.

\begin{algorithm}[h]
	\caption{: \flipTour. }
	\label{algo:2-opt}
	Input: a set $P$ of cities and a starting city $p_0 \in P$.\\
	Output: a tour whose optimal trajectory cannot be improved by inverting a subtour.
	\begin{algorithmic}[1]
		\State $\pi_{opt} \gets \texttt{arbitrary\_tour}(P, p_0)$
		\State $C_{opt} \gets \texttt{oracle(}\pi_{opt}\texttt{)}$
		\State $improved \gets \texttt{true}$
		\While{$improved$} 
		\State $improved \gets \texttt{false}$
		\For{each city $i \in P\setminus\{p_0\}$}
		\For{each city $j \in P\setminus\{p_0,i\}$}
		\State $\pi_{test} \gets \texttt{flip(}\pi_{opt}, i, j\texttt{)}$
		\State $C_{test} \gets \texttt{oracle(}\pi_{test}\texttt{)}$
		\If{$C_{test} < C_{opt}$}
		\State $\pi_{opt} \gets \pi_{test}$
		\State $C_{opt} \gets C_{test}$
		\State $improved \gets \texttt{true}$
		\State $\texttt{break}$
		\EndIf\
		\EndFor\
		\If{$improved$}
		\State $\texttt{break}$
		\EndIf\
		\EndFor\
		
		\EndWhile\
		\State \textbf{return} $\pi_{opt}$
	\end{algorithmic}
\end{algorithm}

\begin{theorem}
  The above algorithm always reaches a local optimum after calling the oracle at most $O(n^3L)$ many times,
  where $n$ is the number of cities and $L$ is the maximum distance (in
  a dimension) between two cities.
	\label{theorem:2-opt}
\end{theorem}

\begin{proof}
  The initial tour corresponds to a trajectory of length at most $O(nL)$ (\Cref{lem:walk}), thus $C_{opt}$ is initialized with a value of at most $O(nL)$ in Line~2. Now, the main loop iterates only when a shorter trajectory is found, which can occur at most as many times as the length of the initial trajectory. Finally, in each iteration, at most $O(n^2)$ modified tours are evaluated by the oracle. 
\end{proof}

\subsection{Optimal trajectory for a given tour (\multipointastar)}
\label{subsec:A-star}

Here, we discuss the problem of computing an optimal trajectory that realizes a given tour, i.e., how to compute \texttt{racetrack($\pi$)} for a given $\pi$. 
A previous work of interest is Bekos \textit{et al.}~\cite{bekos2018algorithms}, which addresses the problem of computing an optimal \Racetrack trajectory in the so-called ``Indianapolis'' track, where the track has a fixed width and right-angle turns. The Indianapolis setting makes it possible to decompose the computation by using dynamic programming between the (constant sized) corner areas of the track. In contrast, we consider an open space with no simple way to decompose the computation. Our strategy adapts the \AStar algorithm to searching a multi-point optimal path in the configuration graph. Namely, 
given a sequence of points $\pi=(p_1, p_2, \dots, p_n)$, compute (the cost of) an optimal trajectory visiting the points in order, starting at $p_1$ and ending at $p_n$ at zero speed. (In the special case of \VectorTSP, $p_1$ and $p_n$ coincide.)

Hence, the main difficulty is that the optimal trajectory for a tour in open space does not reduce to gluing together subtrajectories for given parts of the tour. Our contribution is to design an estimation function that guides \AStar through these constraints and finds the actual optimum efficiently.

\subsubsection*{Overview of the estimation function}\ \\
\label{costestimation} 

The general principle of \AStar is to explore the search space by generating a set of successors of the current node (in our case, a configuration) whose remaining costs towards destination are \emph{estimated} by a custom function. The most promising nodes are explored first (typically, based on a priority queue). Then, \AStar is guaranteed to find the optimal solution (i.e., path in the search space) as long as the provided function does not overestimate the cost of a node. Furthermore, as is usual for \AStar, the running time of the algorithm is primarily determined by the accuracy of the estimation.

Our estimation function relies on the following key idea.
Given a tour $\pi$ (in $d$-dimensions), a trajectory realizing this tour must have at least the cost of a trajectory realizing the projection of this tour separately in each dimension. Thus, one can use the maximum over these one-dimensional costs as an estimation for $\pi$. 
Consider for example the $2$-dimensional tour $\pi=((5, 10),$ $(10, 12),$ $(14, 7),$ $(8, 1),$ $(3, 5),$ $(5, 10))$ illustrated on Figure~\ref{fig:projection}. 
\begin{figure}[h]
	\centering
	\begin{tikzpicture}[scale=.55]
	\draw[very thin, lightgray] (-1.5,-1.3) grid (15.5,13.3);
	
	\draw[very thick, ->] (-1, 0) to (15,0);
	\draw[very thick, ->] (0, -1) to (0,13);
	
	\tikzstyle{every node}=[circle, fill = white, draw=black, font=\scriptsize, inner sep= .3mm]
	\path (3, 5) node (p5) {\,};
	\path (8, 1) node (p4) {\,};
	\path (14, 7) node (p3) {\,};
	\path (10, 12) node (p2) {\,};
	\path (5, 9) node (p1) {$~$};
        \path (p1) node[draw=none,right=5pt] {\footnotesize{init}};
	\draw[->,thick] (p1)--(p2);
	\draw[->,thick] (p2)--(p3);
	\draw[->,thick] (p3)--(p4);
	\draw[->,thick] (p4)--(p5);
	\draw[->,thick] (p5)--(p1);
	
	\tikzstyle{every node}=[circle, fill = white, draw=black, inner sep= .4mm]
	\path (5, 0) node (px1) {$~$};
	\path (3,0) node (px5) {\,};
	\path (14, 0) node (px3) {\,};
	\draw[dotted, red, thick] (p1) to (px1);
	\draw[dotted, red, thick] (p3) to (px3);
	\draw[dotted, red, thick] (p5) to (px5);
	\tikzstyle{every path}=[red, semithick, shorten >= 1pt,shorten <= 1pt, ->, bend right=30]
	\draw (px1) to (6,-0.2);
	\draw (6,-0.2) to (8,-0.2);
	\draw (8,-0.2) to (11,-0.2);
	\draw (11,-0.2) to (13,-0.2);
	\draw (13,-0.2) to (px3);
	\draw (px3) to (13,0.2);
	\draw (13,0.2) to (11,0.2);
	\draw (11,0.2) to (8,0.2);
	\draw (8,0.2) to (6,0.2);
	\draw (6,0.2) to (4,0.2);
	\draw (4,0.2) to (px5);
	\draw (px5) to (4,-0.2);
	\draw (4,-0.2) to (px1);
	\tikzstyle{every path}=[semithick, shorten >= 1pt,shorten <= 1pt]
	\draw[red, ->] (px3) to[out=-45, in=45, looseness=7] (px3);
	\draw[red, ->] (px5) to[out=135, in=-135, looseness=7] (px5);
	\path (px1) node[draw=none,fill=none,inner sep=.7mm] (px1loop){};
	\draw[red, ->] (px1loop) to[out=-110, in=-70, looseness=8] (px1loop);
	
	\path (0, 1) node[circle, fill = white, draw=black, inner sep= .4mm] (py4) {\,};
	\path (0, 12) node[circle, fill = white, draw=black, inner sep= .4mm] (py2) {\,};
	\path (0, 9) node[circle, fill = white, draw=black, inner sep= .4mm] (py1) {$~$};
	\draw[dotted, blue, thick, shorten >= 5pt] (p1) to (py1);
	\draw[dotted, blue, thick] (p2) to (py2);
	\draw[dotted, blue, thick] (p4) to (py4);		
	\tikzstyle{every path}=[blue, semithick, shorten >= 1pt,shorten <= 1pt, ->, bend right=30]
	\draw (py1) to (0.2,10);
	\draw (0.2,10) to (0.2,11);
	\draw (0.2,11) to (py2);
	\draw (py2) to (-0.2,11);
	\draw (-0.2,11) to (-0.2,9);
	\draw (-0.2,9) to (-0.2,6);
	\draw (-0.2,6) to (-0.2,4);
	\draw (-0.2,4) to (-0.2,2);
	\draw (-0.2,2) to (-0.2,1);
	\draw (0.2,1) to (0.2,2);
	\draw (0.2,2) to (0.2,4);
	\draw (0.2,4) to (0.2,7);
	\draw[shorten >= 3pt] (0.2,7) to (0.2,9);
	
	\tikzstyle{every path}=[semithick, shorten >= 1pt,shorten <= 1pt]
	\draw[blue, ->] (py2) to[out=50, in=130, looseness=7] (py2);
	\draw[blue, ->] (py4) to[out=-130, in=-50, looseness=7] (py4);
	\path (py1) node[draw=none,fill=none,inner sep=.7mm] (py1loop){};
	\draw[blue, ->] (py1loop) to[out=-20, in=20, looseness=8] (py1loop);
	
	\end{tikzpicture}
	\caption{\label{fig:projection} Lower bounding the cost of a tour by its one-dimensional projections.}
      \end{figure}
      A vehicle realizing this tour must pass through the $x$-coordinates $(5,10,14,8,3,5)$ in this order. A similar statement holds for the $y$-coordinates. The sequence can be simplified further by retaining only the \emph{inflexion points}, namely the coordinates where the vehicle begins, ends, or switches direction at zero speed (in the considered dimension). Here, the inflexion points are $(5,14,3,5)$, their visit can be realized by the trajectory depicted in red along the $x$-axis. Again, a similar statement holds for the $y$-coordinates (in blue along the $y$-axis). Thus, the main work is to estimate the cost of such one-dimensional trajectories.
      
      The fact that the speed is zero at the inflexion points makes it possible to decompose the computation as a sum of independent costs~--~one for each consecutive pair (whose calculation is described in~\Cref{subsec:unidimensional}). This does not imply that the $d$-dimensional trajectory computed by \AStar will necessarily reach zero speed in the considered dimension, as other constraints may arise from combining the dimensions, which is fine because we are only lower bounding the costs here.

      The main difficulty is that, as \AStar progresses, what needs to be estimated is a \emph{suffix} of the initial tour rather than the entire tour. Also, the current speed of the vehicle may be different from zero. 
      Let $(x,dx)$ be the projection of the current position and velocity of the vehicle in the considered dimension, and let $x'$ be the position of the next inflexion point in the projection of the tour suffix. Either the vehicle is towards $x'$ (that is, $x'-x$ and $dx$ have the same sign), or it is moving in the opposite direction. In the first case, the vehicle must stop first, then come back to $x'$. Stopping takes $dx$ time units and $dx(dx+1)/2$ space units, then the calculation reduces to computing the cost between two points with starting and finishing speed zero, as above. In the second case, either the vehicle is able to stop without bypassing $x'$, or it is not. If it is not, we replace $x'$ in the projected suffix with the earliest position where the vehicle can stop, which is $x + dx(dx+1)/2$, at a cost of $dx$. The last case, where the vehicle is going towards $x'$ and is able to stop by $x'$ is dealt with in~\Cref{subsec:unidimensional}, showing that this case can again be reduced to a case with zero initial speed.

      In summary, the estimation of a suffix of the tour in a certain configuration reduces to computing, in each dimension separately (1) the cost from the current position and velocity to the next inflexion point (in the considered dimension), possibly changing the position of this point as explained above, and (2) the cost between all consecutive pairs of inflexion points in the projected suffix (in the considered dimension), each time starting and finishing at zero speed. 

\subsection{\EuclideanTSP tours are typically not optimal}
\label{sec:experiments}

In this section, we present experimental evidence that optimal tours for \EuclideanTSP are \emph{typically} not optimal for \VectorTSP. The experiments are conducted using the algorithmic components discussed in~\Cref{sec:high-level-algorithm} and~\ref{subsec:A-star} above. 
The instances are generated by placing cities uniformly at random within a two-dimensional square area. For each instance, we first use an external \EuclideanTSP solver (namely, \texttt{Concorde}~\cite{concorde}) to obtain an optimal \EuclideanTSP tour $\pi$, then we compute an optimal racetrack trajectory for this tour, $\texttt{racetrack($\pi$)}$, using the \multipointastar algorithm presented above. Then, in order to see if a better solution than $\texttt{racetrack($\pi$)}$ exists for this instance, we take $\pi$ as a starting point and try to improve this tour using the \flipTour heuristic (also presented above), with \multipointastar for computing the cost of the flipped tours.

Given the goal of the experiments, the only metric that we consider is how often the optimal tour for \EuclideanTSP turns out to be suboptimal for \VectorTSP. In other words, how often at least one flip leads to an improvement of the resulting trajectory. Two series of experiments were conducted in two-dimensional space. The results are averaged over $100$ iterations.
\begin{figure}[ht]
\centering
     \begin{subfigure}[t]{6cm}
	\includegraphics[scale=.45]{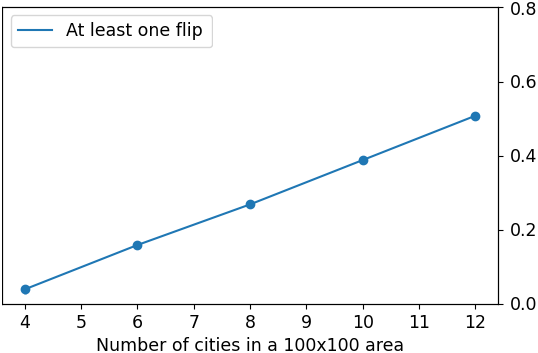}
         \label{fig:y equals x}
       \end{subfigure}
       \hspace{.5cm}
     \begin{subfigure}[t]{6cm}
	\includegraphics[scale=.45]{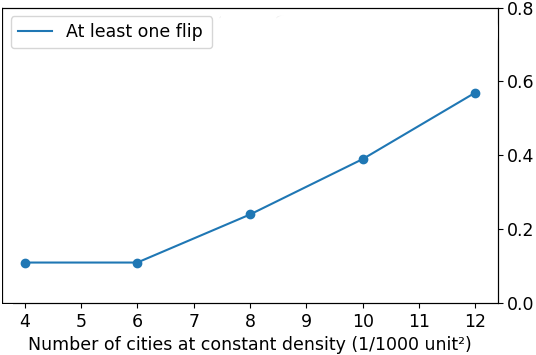}
     \end{subfigure}        
     \caption{\label{fig:plots}Probability that an \EuclideanTSP solution is improved (by a flip).}
\end{figure}
In the first experiment (\Cref{fig:plots}, left), the space is limited to a $100\times 100$ subset of $\mathbb{Z}^2$, in which $n$ cities are distributed uniformly at random, for $n$ ranging over $4,6,8,10,12$. In order to eliminate the potential effects of a restricted space, the second experiments (\Cref{fig:plots}, right) adapts the same scenario on a growing space, so that the density of cities is constant. The results suggest clearly that \EuclideanTSP tours tend to be suboptimal as the number of cities grows, as well as when the space is less restricted, with nearly $60\%$ suboptimal tours for instances of $12$ cities. 
Recall that the \flipTour algorithm is only a heuristic that stops when a local optimum is reached, whereas the reference \EuclideanTSP tour is a global optimum. As a result, our experiments can \emph{under-estimate} the discrepancy between \VectorTSP and \EuclideanTSP (which only makes our point stronger).

To conclude this section, \Cref{fig:improved-tour} shows a larger instance of $20$ cities and two corresponding trajectories; on the left side, an optimal trajectory realizing an optimal \EuclideanTSP tour~$\pi$ returned by the Concorde solver (with a cost of $128$); on the right side, an optimal trajectory realizing a different tour $\pi'$ obtained after two flips of $\pi$ (with a cost of $120$). 

\begin{figure}[t]
  \centering
  \begin{subfigure}[b]{0.44\textwidth}
    \includegraphics[scale=.45]{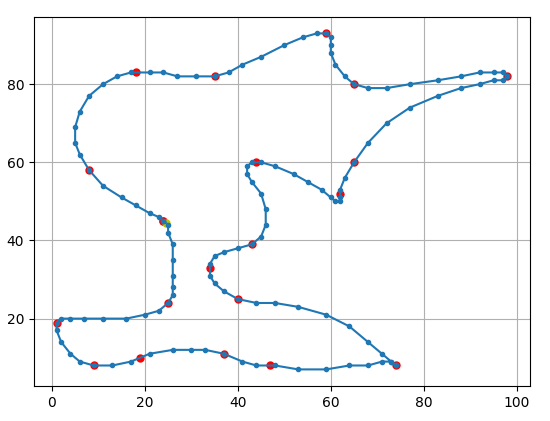}
  \end{subfigure}
  \begin{subfigure}[b]{0.44\textwidth}
    \includegraphics[scale=.46]{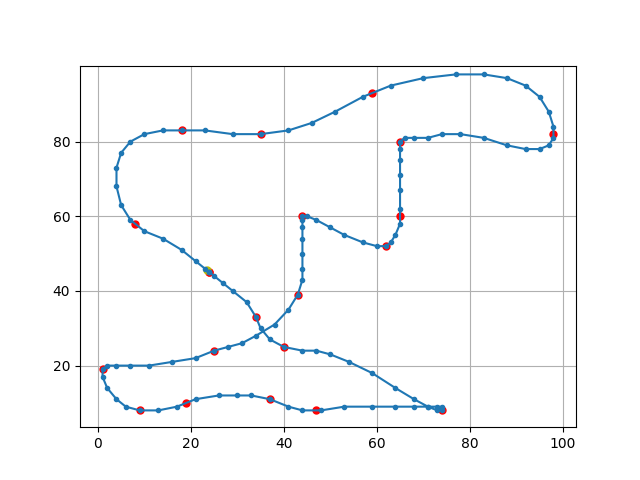}
  \vspace{-17pt}
  \end{subfigure}
  \caption{\label{fig:improved-tour}Example of an instance where the optimal \EuclideanTSP tour can be improved.}
\end{figure}

\section{Conclusion and future work}
\label{sec:conclusion}

In this paper, we have presented and studied a new version of the TSP called \VectorTSP, where the movements of the vehicle must obey initial constraints according to the Racetrack model. After proving a number of basic properties about this problem and about the Racetrack model, we showed that \VectorTSP is NP-hard. While not really surprising, the proof of this result turned out to be highly non-trivial. More direct transformation from existing problems may exist, we leave this as an open question.

We used several of the preliminary observations for devising and motivating an algorithmic framework based on two interacting layers, namely a high level tour computation algorithm, and an exact guided pathfinding algorithm for trajectory evaluation. As simple as it is, we showed through experiments that this framework outperforms optimal solutions based on the \EuclideanTSP, thereby motivating further study of the problem.

One question of interest is how far an optimal trajectory of an optimal \EuclideanTSP tour could be from the optimal \VectorTSP solution.
Our work also leaves a number of algorithmic questions open. For example, the high-level part of the algorithm relies on a standard TSP heuristics. Would there be dedicated heuristics for this component in \VectorTSP? On the other hand, our \multipointastar algorithm for trajectory computation could be of independent interest and it would be interesting to find other problems beyond TSP where such a component is useful. Indeed, the Racetrack model is appealing for its simplicity and amenability to algorithmic investigation. Several geometric, robotic, and mobility-related problems could certainly be revisited from a Racetrack perspective, generating new insights and potential for useful applications.

\bibliographystyle{splncs04}
\bibliography{paper} 

\end{document}